\documentclass[11pt]{article}
\usepackage{sustyle}
\usepackage{fullpage}
\usepackage{amsmath}
\usepackage{graphicx}
\usepackage{enumerate}
\usepackage{diagbox}
\usepackage{wrapfig}
\usepackage{tikz}
\usetikzlibrary{positioning}
\usepackage[round]{natbib}
\usepackage{url}  
\usepackage{multirow}
\usepackage{microtype}
\usepackage{amsmath,amsfonts,bbm,bbding,bm}
\usepackage{amsthm}
\usepackage{etoolbox}
\usepackage{subfigure}
\usepackage{booktabs} 
\usepackage{nicefrac}   
\usepackage{color} 
\usepackage{enumitem}
\usepackage{dsfont}
\usepackage[linktocpage]{hyperref}
\usepackage{lmodern}
\usepackage[T1]{fontenc}
\usepackage{authblk}
\author[1]{Zhekun Shi$^*$}
\author[1]{Kaizhao Liu\thanks{Co-first authors (order determined by a random coin flip). Emails: \texttt{\{zhekunshi,mrzt\}@stu.pku.edu.cn.}}} 
\author[2]{Qi Long\thanks{Co-corresponding authors. Emails: \texttt{qlong@upenn.edu}, \texttt{suw@wharton.upenn.edu}, \texttt{jcxiao@upenn.edu}.}}
\author[2]{Weijie J. Su$^\dagger$}
\author[2]{Jiancong Xiao$^\dagger$}
\affil[1]{Peking University}
\affil[2]{University of Pennsylvania}
\definecolor{blue}{rgb}{0, 0, 1}

\hypersetup{
colorlinks,
citecolor=gray,
linkcolor=blue,
urlcolor=black
}

\begin{document}

\title{Fundamental Limits of Game-Theoretic LLM Alignment: Smith Consistency and Preference Matching}
\date{\today}
\maketitle

\begin{abstract}

Nash Learning from Human Feedback (\texttt{NLHF}) is a game-theoretic framework for aligning large language models (LLMs) with human preferences by modeling learning as a two-player zero-sum game. 
However, using raw preference as the payoff in the game highly limits the potential of the game-theoretic LLM alignment framework.
In this paper, we systematically study using what choices of payoff based on the pairwise human preferences can yield desirable alignment properties. 
We establish necessary and sufficient conditions for Condorcet consistency, diversity through mixed strategies, and Smith consistency. 
These results provide a theoretical foundation for the robustness of game-theoretic LLM alignment.
Further, we show the impossibility of preference matching---i.e., no smooth and learnable mappings of pairwise preferences can guarantee a unique Nash equilibrium that matches a target policy, even under standard assumptions like the Bradley-Terry-Luce model. 
This result highlights the fundamental limitation of game-theoretic LLM alignment.

\end{abstract}

\section{Introduction}\label{sec:intro}

Large language models (LLMs) such as OpenAI-o3 \citep{openaio3} and DeepSeek-R1 \citep{DeepSeekAI2025DeepSeekR1IR} have demonstrated impressive capabilities across a wide range of domains, including code generation, data analysis, elementary mathematics, and reasoning \citep{hurst2024gpt,anthropic2024claude,chowdhery2023palm,touvron2023llama,ji2025overview}. These models are increasingly being used to tackle previously unsolved mathematical problems, drive scientific and algorithmic discoveries, optimize complex codebases, and support decision-making processes that were once considered unlikely to be automated in the near future \citep{bubeck2023sparks,eloundou2024gpts,alphaevolve}.

A key factor behind the popularity and effectiveness of LLMs is alignment: the process by which models learn to interact with human users and accommodate diverse human opinions and values by aligning their outputs with human preferences \citep{christiano2017deep}. 
The traditional method for alignment, reinforcement learning from human feedback (\texttt{RLHF}) \citep{Ouyang2022,casper2023open,dong2024rlhf}, typically begins by training a reward model on preference data collected from human labelers, often using the Bradley-Terry-Luce (BTL) model \citep{bradley1952rankanalysis,luce2012individual},
\begin{equation}
\label{eq:BT}
\cP(y \succ y'\mid x)=\frac{\exp(r(x,y))}{\exp(r(x,y))+\exp(r(x,y'))},
\end{equation}
where $r(x, y)$ is the reward function and $\cP(y\succ y'|x)$ is pairwise human preference, i.e., the fraction of individuals who prefer $y$ over $y'$ under prompt $x$.
In this framework, a higher scalar score assigned by the reward model to an LLM-generated response indicates a stronger preference by human labelers. 
The LLM is then fine-tuned through maximizing the reward to produce responses that are more likely to align with these preferences. 
However, \citet{munos2024nash} pointed out that the reward model cannot deal with preferences with cycles, and proposes an alternative alignment approach called Nash learning from human feedback (\texttt{NLHF}).
Unlike the reward-based methods, \texttt{NLHF} directly uses preference data to train a preference model and formulates LLM finetuning as finding Nash equilibrium in a two-player zero-sum game, also known as a von Neumann game \citep{myerson2013game}. 
Specifically, for a given prompt $x$, the LLM's policy $\bm{\pi}$ competes against an opposing policy $\bm{\pi}^\prime$ in a pairwise preference contest, where the objective is to find a policy that maximizes its worst-case preference score. Formally, \texttt{NLHF} solves the following min-max optimization problem:
\begin{equation*} \label{eq:ne} 
\max_{\bm{\pi}} \min_{\bm{\pi}^\prime} \mathbb{E}_{x\sim\rho}\left[\mathbb{E}_{y\sim\bm{\pi}(\cdot\mid x),y^\prime\sim\bm{\pi}^\prime(\cdot\mid x)}\left[\cP\left(y\succ y'\mid x\right)\right]\right],
\end{equation*}
where $\rho$ is a given distribution over prompts. 
However, \citet{munos2024nash} did not demonstrate the advantages of using the preference as the payoff in the game.

Recently, criteria from both social choice theory \citep{conitzer2024social,dai2024mapping,Mishra2023AIAA} and principles related to diversity \citep{xiao2024algorithmic,chakraborty2024maxmin} have been increasingly employed to scrutinize the alignment of LLM with human preference.
Notably, \texttt{RLHF} has been shown to fail both social choice theory considerations \citep{noothigattu2020axiom,siththaranjan2024distributional,ge2024axioms,liu2025statistical} and diversity considerations \citep{xiao2024algorithmic,chakraborty2024maxmin}.
In contrast, \texttt{NLHF} has been proved to enjoy these desirable properties.
It is shown in \citet{maurarivero2025jackpot} and \citet{liu2025statistical} that \texttt{NLHF} is \textit{Condorcet consistent} (see Definition \ref{def: Condorcet consistency}), meaning that the method always outputs the Condorcet winning response, a response that beats every other alternative response in pairwise majority comparisons, whenever one exists. Further, under a no-tie assumption (see Assumption \ref{ass:strict_preference}), \citet{liu2025statistical} showed that \texttt{NLHF} is \textit{Smith consistent} (see Definition \ref{def: Smith consistency}), meaning that the method always outputs responses from the Smith set, the smallest nonempty set of responses that pairwise dominate all alternatives outside the set. 
Moreover, \citet{liu2025statistical} showed that when human preference is diverse, i.e., there does not exist a single response that beat every other alternative, \texttt{NLHF} avoids collapsing to a single response by adopting a \textit{mixed strategy}.

Despite these advantages, there is no reason why we must use raw preference to design the payoff in the game-theoretic alignment approach. 
Using alternative payoffs in the game-theoretic LLM alignment framework might also lead to desirable alignment.
In this work, we systematically investigate the fundamental limits of the game-theoretic LLM alignment framework by analyzing how various choices of payoff influence its ability to satisfy key alignment criteria. We consider the following general game-theoretic alignment problem, involving a mapping applied to the preference denoted by $\Psi$:

\begin{equation}\label{eq:gen-nlhf}
\max_{\bm{\pi}} \min_{\bm{\pi}^{\prime}} \mathbb{E}_{x\sim\rho} \left[\mathbb{E}_{y\sim \bm{\pi}(\cdot\mid x)}\mathbb{E}_{y^\prime\sim \bm{\pi}^\prime(\cdot\mid x)}\left[\Psi\left(\mathcal{P}(y\succ y^{\prime}\mid x)\right)\right]\right].
\end{equation}
The general problem \eqref{eq:gen-nlhf} encompasses a range of games.
When $\Psi(t)=t$ is the identity mapping, the objective in Equation \eqref{eq:gen-nlhf} is equivalent to the standard \texttt{NLHF} objective. When $\Psi(t)=\log(t/(1-t))$ and the preference is generated by a BTL model, Equation \eqref{eq:gen-nlhf} recovers the standard \texttt{RLHF} objective. More importantly, the preference model $\cP_\theta$ used in practice is an estimation of the true human preference, which can be regarded as a noisy mapping of the ground-truth preference $\cP$. Allowing $\Psi$ to be stochastic provides a way to account for the uncertainty and noise inherent in estimating human preferences.
It is worth noting that a similar formalism has been proposed for non game-theoretic approach in \citet{azar2024general}, which used an non-decreasing mapping to process the preference.

In this paper, we first discuss when the solution to problem \eqref{eq:gen-nlhf} is Condorcet consistent and Smith consistent in Section \ref{sec:condorcet} and \ref{sec:smith} respectively. Our results show that these desirable properties are insensitive to the exact value of the payoff, revealing the robustness of game-theoretic alignment approaches. As a special case, we discover a natural generalization of \texttt{RLHF} objective that satisfy all these desirable properties. Technically, we develop novel proof techniques that can tackle a general non-symmetric game directly, instead of relying crucially on the symmetric nature of \texttt{NLHF} as in \citet{liu2025statistical}.

In addition, we examine the diversity of the solution by investigating whether the model produces a mixed strategy \citep{liu2025statistical}, and whether its output can satisfy the criterion of \textit{preference matching} \citep{xiao2024algorithmic}, meaning that the model output exactly matches a target policy which fully accounts for the diversity of human preference. 
Our findings suggest diversity can be ensured by mixed strategies, but exactly matching a target is difficult for any game-theoretic alignment approach.
This reveals a fundamental limitation of game-theoretic alignment approaches.

\subsection{Summary of Contributions}
We summarize our contributions as follows:
\begin{itemize}
    \item We show that Condorcet consistency is insensitive to the exact value of the payoff (Theorem \ref{thm:condorcet-consistent}), revealing the robustness of game-theoretic alignment approaches.
    \item We show that Smith consistency can be ensured by further maintaining the symmetry of the game (Theorem \ref{thm:smith}). Moreover, Smith consistent methods automatically preserve the diversity in human preferences by adopting mixed strategies (Corollary \ref{cor:smith diverse}).
    \item We show that preserving the diversity in human preference strictly, in the sense of preference matching, is impossible in general (Theorem \ref{thm: no pm}). This reveals a fundamental limitation of game-theoretic alignment approaches.
\end{itemize}

Assuming $\Psi$ is continuous, Table \ref{table: summarization} provides a concise summary of our mathematical results.

\begin{table}[htbp]
\caption{Summary of our mathematical results: the necessary and sufficient conditions on $\Psi$ to guarantee certain desirable alignment properties.}
\label{table: summarization}
\centering
\scalebox{0.87}{\begin{tabular}{c | c }
\toprule
Condorcet consistency & $\Psi(t)\geqslant \Psi(1/2)\,,\forall \, 1/2\leqslant t\leqslant 1$ and $\Psi(t)<\Psi(1/2)\,,\forall\, 0\leqslant t< 1/2$ \\ \midrule
Mixed \& Condorcet consistency & $\Psi(t)+\Psi(1-t)\geqslant 2\Psi(1/2)\,,\forall \, 1/2\leqslant t\leqslant 1$ and $\Psi(t)<\Psi(1/2)\,,\forall \, 0\leqslant t< 1/2$\\ \midrule
Smith consistency & $\Psi(t)+\Psi(1-t)= 2\Psi(1/2)\,,\forall \, 1/2\leqslant t\leqslant 1$ and $\Psi(t)<\Psi(1/2)\,,\forall \, 0\leqslant t< 1/2$\\ 
\bottomrule
\end{tabular}}
\end{table}

\subsection{Related Works} 

A general mapping $\Psi$ was first introduced in \citet{azar2024general} to facilitate the analysis of traditional non game-theoretic LLM alignment methodologies. Their objective function, called \texttt{$\Psi$PO}, applies a general mapping $\Psi$ to the original human preference.
In this way, they were able to treat \texttt{RLHF} and \texttt{DPO} as special cases of \texttt{$\Psi$PO} under BTL model and argue that these methods are prone to overfitting. To avoid overfitting, they took $\Psi$ to be identity and arrived at a new efficient algorithm called \texttt{IPO}.
Our problem \eqref{eq:gen-nlhf} can be regarded as the analogy of \texttt{$\Psi$PO} in the context of game-theoretic LLM alignment.
Another difference is that rather than focusing on statistical properties like overfitting, our focus is on the alignment properties such as Smith consistency and preference matching.
Moreover, they restricted $\Psi$ to be a non-decreasing map, while we allow $\Psi$ to be arbitrary, even stochastic.

Condorcet consistency is one of the dominant concept in the theory of voting \citep{gehrlein2006condorcet,alma990019554260106761}, and Smith consistency is its natural generalization \citep{shoham2008multiagent,borgers2010mathematics}.
They have not been studied in the context of LLM alignment until recently \citep{maurarivero2025jackpot,liu2025statistical}.
In \citet{maurarivero2025jackpot}, the authors showed that \texttt{NLHF} with a selection probability that deals with ties is Condorcet consistent. Under a no-tie assumption, \citet{liu2025statistical} showed that \texttt{NLHF} is Condorcet consistent and Smith consistent, whereas \texttt{RLHF} is not unless the preference satisfies a BTL model. Further, the paper showed that the probability that the preference satisfies a BTL model is vanishing under the impartial culture assumption, highlighting a key advantage of the \texttt{NLHF} framework. 

Several recent works also focus on aligning LLMs with the diverse human preference \citep{chakraborty2024maxmin,xiao2024algorithmic,liu2025statistical}.
In \citet{chakraborty2024maxmin}, the authors introduced a mixture model to account for the opinion of minority group and arrive at the \texttt{MaxMin-RLHF} method. In \citet{xiao2024algorithmic}, the authors introduced the concept of preference matching and develop the \texttt{PM-RLHF} objective to pursue this goal. 
\citet{liu2025statistical} demonstrated that the original \texttt{NLHF} yields a mixed strategy when no Condorcet winning response exists, whereas standard \texttt{RLHF} produces a deterministic strategy, highlighting a potential advantage of \texttt{NLHF} in preserving the diversity of human preferences.

\section{Preliminaries}
Consider a general mapping $\Psi: [0,1]\to\mathbb{R}$. We apply $\Psi$ to the preference and study the max-min problem \eqref{eq:gen-nlhf} with this generalized payoff.
Any solution $\bm{\pi}$ employed by the first player at the Nash equilibrium,

\begin{equation}\label{eq:gen-nash-solution}
    \bm{\pi}\in\arg\max_{\bm{\pi}} \min_{\bm{\pi}^{\prime}} \mathbb{E}_{x\sim\rho} \left[\mathbb{E}_{y\sim \bm{\pi}(\cdot\mid x)}\mathbb{E}_{y^\prime\sim \bm{\pi}^\prime(\cdot\mid x)}\left[\Psi\left(\mathcal{P}(y\succ y^{\prime}\mid x)\right)\right]\right]\,,
\end{equation}
is called a Nash solution to the problem \eqref{eq:gen-nlhf}. The Nash solution is the policy which fully aligned LLMs will perform. 
Note that the set of Nash solutions remain the same after an overall shift of payoff, that is, changing $\Psi$ to $\Psi+C$ for any constant $C$ will not affect the problem.
The original \texttt{NLHF} objective \citep{munos2024nash} corresponds to the special case where $\Psi(t) = t$, equivalent to $\Psi(t)=t-1/2$, and the resulting game is symmetric \citep{duersch2012pure}, meaning that the two players are the same. However, for an arbitrary mapping $\Psi$, the game is usually not symmetric, and we only focus on the Nash solution employed by the first player.

Given a prompt $x$, we consider the set of all possible responses generated by the LLM: $\{y_1, \ldots, y_n\}$, where $n$ is the total number of possible responses. 
Without any loss of generality, we drop the dependence on the prompt $x$ from now on. For any two distinct response $y$ and $y'$, recall that $\mathcal{P}(y \succ y^\prime)$ denote the preference of $y$ over $y'$, defined as the expected proportion of individuals who prefer $y$ over $y^\prime$. 
By definition, human preference satisfies the condition $\mathcal{P}(y\succ y^{\prime})+\mathcal{P}(y^{\prime}\succ y)=1$ and naturally we let $\mathcal{P}(y\succ y) = 1/2$ \citep{munos2024nash}.
For any distinct pair of responses $y$ and $y^{\prime}$, we say that $y$ beats $y^{\prime}$ if $\mathcal{P}(y \succ y^{\prime}) > 1/2$. 
Additionally, following \citet{liu2025statistical}, we adopt the No-Tie assumption throughout this paper.

\begin{assumption}[No-Tie]
\label{ass:strict_preference}
For any distinct responses \( y \) and \( y' \), we assume that \( \mathcal{P} (y \succ y') \neq 1/2 \). 
\end{assumption}  

This assumption is both minimal and practically reasonable. First, if the number of labelers is odd, it automatically holds. Even in cases where a tie occurs, it can always be resolved through a more precise comparison.

\paragraph{Notation.} 
For any set $A$, we denote its cardinality by $\vert A\vert$. For any $n \in \mathbb{N}_+$, we define $[n] := \{1, \ldots, n\}$. We use $\delta_{ij} := \mathds{1}\{i = j\}$ for $1 \leqslant i, j \leqslant n$.
We represent high-dimensional vectors using bold symbols. Any policy $\bm{\pi}$ over the set of possible responses $\{y_1, \ldots, y_n\}$ can be identified with a vector in $\mathbb{R}^n$, where each entry $\pi_i$ corresponds to the probability assigned to $y_i$ for $i \in [n]$.
We then define the support of a policy $\bm{\pi}$ as $\operatorname{supp}(\bm{\pi}) := \{y_i \mid \pi_i > 0\,, i \in [n]\}$. We write $\bm{\pi} > 0$ if $\pi_i > 0$ for all $i \in [n]$, and similarly, $\bm{\pi} \geqslant 0$ if $\pi_i \geqslant 0$ for all $i \in [n]$.
For simplicity, we denote $\Psi(\mathcal{P}(y_i \succ y_j))$ as $\Psi_{ij}$ for any $1 \leqslant i, j \leqslant n$, and define the payoff matrix as $\boldsymbol{\Psi}:= \{\Psi_{ij}\}_{1 \leqslant i, j \leqslant n}$.
We then define the total payoff by:
\[
\mathcal{P}_{\Psi}(\boldsymbol{\pi}_1, \boldsymbol{\pi}_2) := \sum_{i=1}^n \sum_{j=1}^n \pi_{1,i} \pi_{2,j} \Psi_{ij}\,.
\]
We denote by $\bm{\delta}_i$ the policy supported solely on $y_i$, i.e., $\operatorname{supp}(\bm{\delta}_i) = \{y_i\}$. Furthermore, the mixed policy $(\bm{\delta}_{i_1} + \ldots + \bm{\delta}_{i_k})/k$ is defined as the policy $\bm{\pi}$ such that
\[
\pi_i = \begin{cases}
1/k\,, &  i \in \{i_1, \cdots, i_k\}\\
0\,, & \text{otherwise}
\end{cases}\,,
\]
for any subset $\{i_1, \ldots, i_k\} \subseteq [n]$.
\section{Condorcet Consistency}\label{sec:condorcet}

In this section, we examine Condorcet consistency—a desirable property for LLM alignment inspired by social choice theory—within the generalized game-theoretic LLM fine-tuning framework \eqref{eq:gen-nlhf}. We begin by defining the Condorcet winning response and Condorcet consistency. We then present Theorem \ref{thm:condorcet-consistent}, which characterizes the necessary and sufficient conditions on the mapping $\Psi$ to guarantee Condorcet consistency. Next, we examine the conditions under which $\Psi$ preserves human preference diversity when no Condorcet winner exists and introduce Theorem \ref{thm:diversity}. Finally, we discuss the continuity assumption underlying Theorem \ref{thm:diversity}. We provide the proofs of Theorems \ref{thm:condorcet-consistent} and \ref{thm:diversity} in Section \ref{sec:proof-condorcet} and Section \ref{sec:proof-diversity}, respectively.

Following \citet{liu2025statistical}, a response that is preferred over all others in pairwise comparisons by the preference model is referred to as the Condorcet winning response. 

\begin{definition}[Condorcet Winning Response]\label{def:Condorcet winning response}  
A response \( y^\star \) is called a Condorcet winning response if \( \mathcal{P}(y^\star \succ y) > 1/2 \) for all \( y \neq y^\star \).  
\end{definition}

It is clear that there can be at most one Condorcet winning response. When such a response exists, a natural requirement for LLM alignment is that this response should be the output. This property is known as Condorcet consistency.

\begin{definition}[Condorcet Consistency]\label{def: Condorcet consistency}
    Problem \eqref{eq:gen-nlhf} is Condorcet consistent if 
    when there exists a Condorcet winning response, the Nash solution to \eqref{eq:gen-nlhf} is unique and corresponds to this Condorcet winning response.
\end{definition}

\citet{liu2025statistical} and \citet{maurarivero2025jackpot} showed that the original \texttt{NLHF} objective, which corresponds to the case where $\Psi(\cdot)$ is identity, is Condorcet consistent. In this paper, we proceed further and investigate the following question:

\begin{center}
\emph{ Which choices of $\Psi$ ensure Condorcet consistency?}
\end{center}

We answer this question in Theorem \ref{thm:condorcet-consistent} and the proof is deferred to Section \ref{sec:proof-condorcet}.

\begin{theorem}\label{thm:condorcet-consistent}
Problem \eqref{eq:gen-nlhf} is Condorcet consistent if and only if $\Psi(\cdot)$ satisfies
\begin{equation}\label{eq:thm-condorcet-consistent}
\begin{cases}
\Psi(t)\geqslant\Psi(1/2)\,,1\geqslant t\geqslant 1/2\\ \Psi(t)<\Psi(1/2)\,, 1/2>t\geqslant 0
\end{cases}\,.
\end{equation}
\end{theorem}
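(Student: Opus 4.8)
The plan is to work throughout with the normalized payoff matrix. Since shifting $\Psi$ by a constant leaves the set of Nash solutions unchanged, I would first assume without loss of generality that $\Psi(1/2)=0$. After this normalization the diagonal vanishes, $\Psi_{ii}=\Psi(1/2)=0$, and the claimed condition \eqref{eq:thm-condorcet-consistent} becomes a pure \emph{sign} statement: $\Psi_{ij}\geqslant 0$ whenever $\mathcal{P}(y_i\succ y_j)\geqslant 1/2$, and $\Psi_{ij}<0$ whenever $\mathcal{P}(y_i\succ y_j)<1/2$. Under the No-Tie assumption this means the sign of each off-diagonal entry records exactly whether $y_i$ beats $y_j$. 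This sign characterization is the workhorse for both directions, and since the theorem is an ``if and only if'' I would prove sufficiency and necessity separately.

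For sufficiency, assume the condition and let $y_1$ be the unique Condorcet winning response. The crux is that the pure strategy $\bm{\delta}_1$ is simultaneously good for \emph{both} players. If the first player commits to $\bm{\delta}_1$, then against any column $j$ the payoff is $\Psi_{1j}$, which equals $0$ for $j=1$ and is $\geqslant 0$ for $j\neq 1$ (as $y_1$ beats every other response); hence $\max_{\bm{\pi}_1}\min_{\bm{\pi}_2}\mathcal{P}_\Psi\geqslant 0$. Conversely, for \emph{any} first-player strategy $\bm{\pi}_1$, testing it against the second-player move $\bm{\delta}_1$ gives $\mathcal{P}_\Psi(\bm{\pi}_1,\bm{\delta}_1)=\sum_{i\neq 1}\pi_{1,i}\Psi_{i1}\leqslant 0$, since each $\Psi_{i1}<0$; maximizing over $\bm{\pi}_1$ shows the max-min value is $\leqslant 0$. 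Thus the value is exactly $0$, and uniqueness drops out of the equality case: a Nash solution $\bm{\pi}_1^\star$ must secure value $0$, so $\mathcal{P}_\Psi(\bm{\pi}_1^\star,\bm{\delta}_1)\geqslant 0$, yet the displayed inequality is an equality only when $\bm{\pi}_1^\star=\bm{\delta}_1$. This yields both Condorcet consistency and uniqueness.

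For necessity I would argue by contrapositive: if the condition fails there is a witness $t_0$ violating one of the two clauses, and I would exhibit a two-response instance ($n=2$) that possesses a Condorcet winner but whose Nash solution is either not the winner or not unique. Setting $\mathcal{P}(y_1\succ y_2)=p>1/2$ makes $y_1$ the Condorcet winner; writing $a=\Psi(p)$ and $b=\Psi(1-p)$ after normalization, I choose $p=t_0$ when the first clause fails (which forces $a<0$) or $1-p=t_0$ when the second fails (which forces $b\geqslant 0$). In either case the max-min over the first player's weight $\alpha$ reduces to maximizing $\min\{(1-\alpha)b,\ \alpha a\}$ over $\alpha\in[0,1]$, and a short case analysis on the signs of $a$ and $b$ shows the maximizer is either an interior mixed strategy achieving value strictly above what $\bm{\delta}_1$ attains, or the pure strategy $\bm{\delta}_2$, or else the value is constant in $\alpha$ so that every strategy is optimal; in all cases Condorcet consistency is violated.

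I expect the main obstacle to be the sufficiency direction's departure from the symmetric setting of \citet{liu2025statistical}: because a general $\Psi$ destroys the antisymmetry $\Psi_{ij}=-\Psi_{ji}$ of the original \texttt{NLHF} game, one cannot conclude the value is zero by having the second player merely imitate the first. The fix is to use the Condorcet winner $y_1$ itself as the second player's capping strategy, directly sandwiching the max-min value between $0$ and $0$, and then to read uniqueness off the strict negativity of the entries $\Psi_{i1}$. On the necessity side the only care required is to keep the sign case analysis exhaustive, including the degenerate boundary cases where $a$ or $b$ equals $0$ and the game value becomes flat in $\alpha$.
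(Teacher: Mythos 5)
Your proposal is correct and follows essentially the same route as the paper's proof: necessity via a $2\times 2$ counterexample game with entries $\Psi(t)$, $\Psi(1-t)$ and a sign case analysis, and sufficiency by testing any candidate solution against the opponent's pure strategy $\bm{\delta}_1$, with the strict inequality $\Psi_{i1}<\Psi(1/2)$ forcing uniqueness. Your normalization $\Psi(1/2)=0$ and the phrasing in terms of sandwiching the max-min value (rather than the paper's Nash-equilibrium-pair inequalities) are only cosmetic differences, and your value-based argument additionally makes explicit that $\bm{\delta}_1$ attains the max-min, i.e.\ existence.
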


Note that this condition is much weaker than requiring $\Psi$ to be increasing. It only demands that $\Psi$ maps any value greater than 1/2 to some value larger than $\Psi(1/2)$, and any value less than $1/2$ to some value smaller than $\Psi(1/2)$. This implies that a wide range of mapping functions can be used within the game-theoretic LLM alignment framework \eqref{eq:gen-nlhf} to ensure Condorcet consistency. 
Furthermore, in practice, we do not have access to the ground-truth preference model. 
Instead, we parameterize the preference model using a deep neural network, $\mathcal{P}_\theta(y \succ y^\prime)$, trained on large-scale preference datasets \citep{munos2024nash}. Due to the limitations of the datasets and the optimization process, the learned model only approximates the true human preferences. We can view this approximation as $\Psi(\mathcal{P}(y \succ y^\prime))$ in our framework. In practice, we can enforce the parameterized preference model to satisfy $\mathcal{P}_\theta(y \succ y) = 1/2$, then our results show that as long as this approximation yields the correct pairwise majority comparisons—specifically, that $\mathcal{P}_\theta(y\succ y^\prime)\geqslant 1/2> \mathcal{P}_\theta(y^\prime\succ y)$ whenever $y$ beats $y^\prime$—then the LLM alignment remains Condorcet consistent. This strongly highlights the robustness of the game-theoretic LLM alignment approach in achieving Condorcet consistency.

When a Condorcet winning response does not exist, human preferences are diverse and there is no single response that is better than others. Therefore, in order to preserve the diversity inherent in human preferences, it is natural to require the Nash solution not to collapse to a single response. This motivation leads to the following characterization of diversity through mixed strategies.

\begin{definition}[Mixed Strategies]\label{def:mixed strategy}
    A Nash solution $\bm{\pi}$ is called a mixed strategy if $|\operatorname{supp}(\bm{\pi})|>1$.
\end{definition}

\citet{liu2025statistical} demonstrated that the original \texttt{NLHF}, which corresponds to the case where $\Psi(\cdot)$ is identity, yields a mixed strategy when no Condorcet winning response exists. Assuming that problem \eqref{eq:gen-nlhf} is Condorcet consistent, we proceed further and investigate: 
\begin{center}
    \emph{ Which choices of $\Psi$ lead to a mixed strategy in the absence of a Condorcet winning response?}
\end{center}

We now focus on mappings $\Psi$ that are continuous at 1/2, a condition commonly encountered in practical learning setups. Under this mild assumption, we answer this question in Theorem \ref{thm:diversity} and provide its proof in Section \ref{sec:proof-diversity}.

\begin{theorem}\label{thm:diversity}
Assume that the mapping $\Psi(\cdot)$ is continuous at $1/2$.
Assuming the Condorcet consistency of problem \eqref{eq:gen-nlhf}, then any Nash solution is mixed when there is no Condorcet winning response if and only if $\Psi(\cdot)$ satisfies 
    \begin{equation}\label{eq:thm-diversity}
        \Psi(t)+\Psi(1-t)\geqslant 2\Psi(1/2)\,,\forall\, 0\leqslant t\leqslant 1\text{ and }
        \Psi(t)<\Psi(1/2)\,, \forall \, 0\leqslant t< 1/2\,.
    \end{equation}
\end{theorem}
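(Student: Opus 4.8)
The plan is to first normalize the payoff. Since shifting $\Psi$ by a constant leaves the Nash solutions unchanged, I assume without loss of generality that $\Psi(1/2)=0$; the Condorcet consistency hypothesis then reads $\Psi_{ii}=0$, with $\Psi(t)\geq 0$ for $t\geq 1/2$ and $\Psi(t)<0$ for $t<1/2$, so the second half of \eqref{eq:thm-diversity} is automatic. Thus the real content is the equivalence between ``every instance without a Condorcet winner has only mixed Nash solutions'' and the single inequality $\Psi(t)+\Psi(1-t)\geq 0$ for all $t$. Throughout I work with the matrix game $\bm{\Psi}=\{\Psi_{ij}\}$ on the simplex; von Neumann's minimax theorem supplies a common value $V=\max_{\bm{\pi}}\min_{\bm{\pi}'}\mathcal{P}_\Psi(\bm{\pi},\bm{\pi}')=\min_{\bm{\pi}'}\max_{\bm{\pi}}\mathcal{P}_\Psi(\bm{\pi},\bm{\pi}')$ and a nonempty set of Nash solutions.

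For the sufficiency direction I would prove the key inequality $V\geq 0$. Writing $V=\min_{\bm{\pi}'}\max_{\bm{\pi}}\mathcal{P}_\Psi(\bm{\pi},\bm{\pi}')$ and bounding the inner maximum below by the diagonal choice $\bm{\pi}=\bm{\pi}'$ gives, for every $\bm{\pi}'$, $\max_{\bm{\pi}}\mathcal{P}_\Psi(\bm{\pi},\bm{\pi}')\geq \mathcal{P}_\Psi(\bm{\pi}',\bm{\pi}')=\sum_{i<j}\pi'_i\pi'_j(\Psi_{ij}+\Psi_{ji})$, and each symmetrized entry equals $\Psi(t)+\Psi(1-t)\geq 0$ under \eqref{eq:thm-diversity}, so $V\geq 0$. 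On the other hand, when there is no Condorcet winner each response $y_k$ is beaten by some $y_{j_0}$, so the value guaranteed by the pure strategy $\bm{\delta}_k$ is $\min_j\Psi_{kj}\leq\Psi_{kj_0}<0\leq V$; hence no pure strategy is optimal and every Nash solution is mixed.

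The necessity direction, which I expect to be the main obstacle, proceeds by contraposition: assume \eqref{eq:thm-diversity} fails, i.e. $\Psi(a)+\Psi(1-a)<0$ for some $a$ (necessarily $a>1/2$, so $\Psi(a)\geq 0>\Psi(1-a)$), and exhibit one instance without a Condorcet winner that possesses a pure Nash solution. A first difficulty is that three responses never suffice: a winner-free tournament on three responses is a $3$-cycle, and a short computation with the adjugate (its value being $(w_1w_2w_3-l_1l_2l_3)/S$ with $S>0$ and all equilibrium weights strictly positive) shows the unique equilibrium is completely mixed. I therefore use four responses $y_1,y_2,y_3,y_4$: let $y_2,y_3,y_4$ each beat $y_1$ with the same preference $c$, and let $y_2,y_3,y_4$ form a symmetric $3$-cycle at preference $a$, i.e. $\mathcal{P}(y_2\succ y_3)=\mathcal{P}(y_3\succ y_4)=\mathcal{P}(y_4\succ y_2)=a$. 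Every response then loses to someone, so there is no Condorcet winner.

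It remains to show $\bm{\delta}_1$ is a Nash solution. Playing $y_1$ guarantees $\min_j\Psi_{1j}=\Psi(1-c)=:V_0<0$, hence $V\geq V_0$. For the reverse bound the opponent plays the uniform policy $(\bm{\delta}_2+\bm{\delta}_3+\bm{\delta}_4)/3$; a direct computation gives the coordinates of $\bm{\Psi}$ applied to it as $V_0$ for $i=1$ and $\tfrac13(\Psi(a)+\Psi(1-a))$ for $i=2,3,4$, using the cancellation inside the symmetric cycle. Here continuity of $\Psi$ at $1/2$ enters decisively: choosing $c$ close enough to $1/2$ makes $V_0=\Psi(1-c)$ as close to $0$ from below as desired, hence larger than the fixed negative number $\tfrac13(\Psi(a)+\Psi(1-a))$, so the opponent caps every coordinate at $V_0$ and $V\leq V_0$. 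Therefore $V=V_0$ is attained by the pure strategy $\bm{\delta}_1$, contradicting the all-mixed hypothesis. The delicate points to get right are verifying the winner-free property, the exact diagonal/off-diagonal bookkeeping for the uniform opponent, and the quantitative use of continuity to separate $V_0$ from the cycle value.
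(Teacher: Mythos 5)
Your proof is correct and follows essentially the same route as the paper's: your sufficiency argument is the paper's diagonal-symmetrization bound $\mathcal{P}_\Psi(\bm{\pi}',\bm{\pi}')\geqslant\Psi(1/2)$ combined with the beaten-response contradiction, and your necessity counterexample (a symmetric $3$-cycle at preference $a$ plus a fourth response beaten by everyone, the opponent playing uniformly on the cycle, with continuity at $1/2$ used to push the dominating preference toward $1/2$) is exactly the paper's Table~\ref{table:diversity} game up to relabeling of the responses. The only differences are cosmetic: you normalize $\Psi(1/2)=0$, run the necessity by contraposition rather than as a direct implication with a limit $t_2\to 1/2$, and certify the pure Nash solution through the game value $V=V_0$ instead of exhibiting the equilibrium pair $(\bm{\delta}_4,(\bm{\delta}_1+\bm{\delta}_2+\bm{\delta}_3)/3)$ by mutual best responses.
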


The first condition arises from the requirement of mixed strategies, while the second condition is a reduction of the condition inherited from Theorem \ref{thm:condorcet-consistent} under the assumption of Condorcet consistency and the first condition.

Choices of payoff functions are harder to characterize when we relax the continuity assumption. The following example investigate a special piece-wise constant mapping, which does not satisfy the first condition in Theorem \ref{thm:diversity}.

\begin{example}\label{example:continuous}
Let $M_{-}<\Psi(1/2)\leqslant M_{+}$ and take
\[
\Psi(t)=\begin{cases}
M_{-}\,, & 0\leqslant t< 1/2 \\
\Psi(1/2)\,, & t=1/2 \\
M_{+}\,,& 1/2< t\leqslant 1
\end{cases}\,.
\]
Then, any Nash solution is mixed when there is no Condorcet winning response.
\end{example}
\begin{proof}[Proof of Example \ref{example:continuous}]
We prove this conclusion by contradiction. Suppose that the Nash solution is $\bm{\delta}_{i^\star}$ for some $i^\star\in [n]$, and the Nash equilibrium is $(\bm{\delta}_{i^\star},\bm{\pi}^\star)$. As there is no Condorcet winning response, by definition, there exists $j^\prime$ such that $\mathcal{P}(y_{i^\star}\succ y_{j^\prime})<1/2$. Then we have 
\begin{equation}\label{eq:proof-example-continuous-1}
\mathcal{P}_\Psi(\bm{\delta}_{i^\star},\bm{\pi}^\star)=\min_{\bm{\pi}}\mathcal{P}_\Psi(\bm{\delta}_{i^\star},\bm{\pi})\leqslant \mathcal{P}_\Psi(\bm{\delta}_{i^\star},\bm{\delta}_{j^\prime})=\Psi\left(\mathcal{P}\left(y_{i^\star}\succ y_{j^\prime}\right)\right)=M_-\,.
\end{equation}
However, choosing $i^\prime$ such that $\pi^\star_{i^\prime}>0$, we have 
\[
\mathcal{P}_\Psi(\bm{\delta}_{i^\star},\bm{\pi}^\star)=\max_{\bm{\pi}}\mathcal{P}_\Psi(\bm{\pi},\bm{\pi}^\star)\geqslant\mathcal{P}_\Psi(\bm{\delta}_{i^\prime},\bm{\pi}^\star)=\sum_{i=1}^n\pi_i^\star\Psi\left(\mathcal{P}\left(y_{i^\prime}\succ y_i\right)\right)>M_-\,,
\]
which causes a contradiction to \eqref{eq:proof-example-continuous-1}. Hence, we conclude our proof.
\end{proof}

This example implies that choices of payoff functions are considerably richer when we relax the continuity assumption.

\subsection{Proof of Theorem \ref{thm:condorcet-consistent}}
\label{sec:proof-condorcet}

Without any loss of generality, we assume that $y_1$ is the Condorcet winning response. First, we show that a necessary condition that ensures the Condorcet consistency of problem \eqref{eq:gen-nlhf} is:
\begin{equation}\label{eq:necessary-condition-Condorcet}
    \begin{cases}
        \Psi(t)\geqslant \Psi(1/2)\,, 1\geqslant t\geqslant1/2\\\Psi(t)<\Psi(1/2)\,,0\leqslant t<1/2
    \end{cases}\,.
\end{equation}
To show this, we examine the case where $n=2$. For any $1\geqslant t> 1/2$, we consider the game with the payoff in Table \ref{table: condorcet consistency}. By the definition of Condorcet consistency, all Nash equilibrium of this game is of the form $(\bm{\delta}_1,\bm{\pi}^\star)$ for some $\bm{\pi}^\star$.

\begin{table}[htbp]
\caption{Payoff matrix with two responses $\{y_1,y_2\}$.}
\label{table: condorcet consistency}
\centering
\scalebox{1.0}{\begin{tabular}{c | c  c }
\toprule
          $\Psi(\cP(y\succ y'))$ &  
          $y'=y_1$ & 
          $y'=y_2$ 
         \\ \midrule
         $y=y_1$ &
         $\Psi(1/2)$ & $\Psi(t)$ \\
         $y=y_2$ &
         $\Psi(1-t)$ & $\Psi(1/2)$ \\ \bottomrule
    \end{tabular}}
\end{table}

\paragraph{Case 1.} If $\Psi(t)>\Psi(1/2)$, we have 
\[
\bm{\pi}^\star=\arg\min_{\bm{\pi}}\mathcal{P}_\Psi(\bm{\delta}_1,\bm{\pi})=\arg\min_{\bm{\pi}}\left\{\pi_1\Psi(1/2)+\pi_2\Psi(t)\right\}=\bm{\delta}_1\,.
\]
Therefore, we have
\[\begin{aligned}
&\Psi(1/2)=\mathcal{P}_\Psi(\boldsymbol{\delta}_1,\boldsymbol{\delta}_1)=\max_{\boldsymbol{\pi}}\mathcal{P}_\Psi(\boldsymbol{\pi},\boldsymbol{\delta}_1)\geqslant \mathcal{P}_\Psi(\boldsymbol{\delta}_2,\boldsymbol{\delta}_1)=\Psi_{21}=\Psi(1-t)\,,\\
&\Psi(1/2)=\mathcal{P}_\Psi(\boldsymbol{\delta}_1,\boldsymbol{\delta}_1)=\min_{\boldsymbol{\pi}}\mathcal{P}_\Psi(\boldsymbol{\delta}_1,\boldsymbol{\pi})\leqslant \mathcal{P}_\Psi(\boldsymbol{\delta}_1,\boldsymbol{\delta}_2)=\Psi_{12}=\Psi(t)\,.
\end{aligned}\]
Hence, we have $\Psi(1-t)\leqslant\Psi(1/2)< \Psi(t)$. If $\Psi(1/2)=\Psi(1-t)$, notice that 
\[\begin{aligned}
&\mathcal{P}_\Psi(\boldsymbol{\pi},\boldsymbol{\delta}_1)=\pi_1\Psi(1/2)+\pi_2\Psi(1-t)=\Psi(1/2)\Longrightarrow \boldsymbol{\delta}_2\in\arg\max_{\boldsymbol{\pi}}\mathcal{P}_\Psi(\boldsymbol{\pi},\boldsymbol{\delta}_1)\,,\\
&\mathcal{P}_\Psi(\boldsymbol{\delta}_2,\boldsymbol{\pi})=\pi_1\Psi(1-t)+\pi_2\Psi(1/2)=\Psi(1/2)\Longrightarrow \boldsymbol{\delta}_1\in\arg\min_{\boldsymbol{\pi}}\mathcal{P}_\Psi(\boldsymbol{\delta}_2,\boldsymbol{\pi})\,.
\end{aligned}\]
Therefore, $(\boldsymbol{\delta}_2,\boldsymbol{\delta}_1)$ is also a Nash equilibrium, which causes a contradiction to the fact that problem \eqref{eq:gen-nlhf} is Condorcet consistent. Therefore, we have $\Psi(t)>\Psi(1/2)>\Psi(1-t)$ for any $1\geqslant t>1/2$.

\paragraph{Case 2.} If $\Psi(t)<\Psi(1/2)$, we have 
\[
\bm{\pi}^\star=\arg\min_{\bm{\pi}}\mathcal{P}_\Psi(\bm{\delta}_1,\bm{\pi})=\arg\min_{\bm{\pi}}\left\{\pi_1\Psi(1/2)+\pi_2\Psi(t)\right\}=\bm{\delta}_2\,.
\]
However, notice that
\[
\Psi(1/2)=\mathcal{P}_\Psi(\bm{\delta}_2,\bm{\delta}_2)\leqslant\max_{\bm{\pi}}\mathcal{P}_\Psi(\bm{\pi},\bm{\delta}_2)=\mathcal{P}_\Psi(\bm{\delta}_1,\bm{\delta}_2)=\Psi(t)<\Psi(1/2)\,,
\]
which causes a contradiction.

\paragraph{Case 3.} If $\Psi(t)=\Psi(1/2)$. When $\Psi(1-t)=\Psi(1/2)$, any $(\bm{\pi}_1,\bm{\pi}_2)$ is a Nash equilibrium, which causes a contradiction to the fact that problem \eqref{eq:gen-nlhf} is Condorcet consistent. When $\Psi(1-t)>\Psi(1/2)$, note that
\[\begin{aligned}
\mathcal{P}_\Psi(\bm{\delta}_2,\bm{\pi})=\pi_1\Psi(1-t)+\pi_2\Psi(1/2)\geqslant\Psi(1/2)\Longrightarrow &\ \bm{\delta}_2\in \arg\min_{\bm{\pi}}\mathcal{P}_\Psi(\bm{\delta}_2,\bm{\pi})\,,\\
\mathcal{P}_\Psi(\bm{\pi},\bm{\delta}_2)=\pi_1\Psi(t)+\pi_2\Psi(1/2)=\Psi(1/2)\Longrightarrow &\ \bm{\delta}_2\in\arg\max_{\bm{\pi}}\mathcal{P}_\Psi(\bm{\pi},\bm{\delta}_2)\,. 
\end{aligned}\]
Therefore, $(\bm{\delta}_2,\bm{\delta}_2)$ is a Nash equilibrium, which also causes a contradiction to the fact problem \eqref{eq:gen-nlhf} is Condorcet consistent. Hence, we have $\Psi(1-t)<\Psi(1/2)$.

In summary, for any $1\geqslant t> 1/2$, we have $\Psi(1-t)<\Psi(1/2)\leqslant \Psi(t)$. 
Hence, \eqref{eq:necessary-condition-Condorcet} holds if problem \eqref{eq:gen-nlhf} is Condorcet consistent. 
Next, we prove that \eqref{eq:necessary-condition-Condorcet} is also sufficient for the Condorcet consistency of problem \eqref{eq:gen-nlhf}. Recall that $\Psi_{i 1}=\Psi(\mathcal{P}(y_i\succ y_1))<\Psi(1/2)$, and $\Psi_{1 i}=\Psi(\mathcal{P}(y_1\succ y_i))\geqslant\Psi(1/2)$ for any $i\neq 1$. If $(\boldsymbol{\pi}_1^\star,\boldsymbol{\pi}_2^\star)$ is a Nash equilibrium. Notice that
\[\begin{aligned}
&\mathcal{P}_\Psi(\boldsymbol{\pi}_1^\star,\boldsymbol{\pi}_2^\star)=\max_{\boldsymbol{\pi}}\mathcal{P}_\Psi(\boldsymbol{\pi},\boldsymbol{\pi}_2^\star)\geqslant \mathcal{P}_\Psi(\boldsymbol{\delta}_1,\boldsymbol{\pi}_2^\star)=\sum_{i=1}^n \pi^\star_{2,i}\Psi_{1i}\,,\\
&\mathcal{P}_\Psi(\boldsymbol{\pi}_1^\star,\boldsymbol{\pi}_2^\star)=\min_{\boldsymbol{\pi}}\mathcal{P}_\Psi(\boldsymbol{\pi}_1^\star,\boldsymbol{\pi})\leqslant \mathcal{P}_\Psi(\boldsymbol{\pi}_1^\star,\boldsymbol{\delta}_1)=\sum_{i=1}^n\pi_{1,i}^\star\Psi_{i1}\,.
\end{aligned}\]
Therefore, if $\boldsymbol{\pi}_1^\star\neq \boldsymbol{\delta}_1$, we have 
\begin{equation}\label{eq:uniqueness-proof-condorcet}
    \Psi(1/2)\leqslant \sum_{i=1}^n \pi^\star_{2,i}\Psi_{1i}\leqslant \mathcal{P}_\Psi(\boldsymbol{\pi}_1^\star,\boldsymbol{\pi}_2^\star)\leqslant \sum_{i=1}^n\pi_{1,i}^\star\Psi_{i1}<\Psi(1/2)\,,
\end{equation}
which causes a contradiction. Therefore, $\boldsymbol{\pi}_1^\star=\boldsymbol{\delta}_1$, i.e., problem \eqref{eq:gen-nlhf} is Condorcet consistent.

\subsection{Proof of Theorem \ref{thm:diversity}}
\label{sec:proof-diversity}

First, according to Theorem \ref{thm:condorcet-consistent}, when the Nash solution is Condorcet consistent, we have 
\begin{equation}\label{eq:diversity-condorcet-condition}
    \begin{cases}
        \Psi(t)\geqslant\Psi(1/2)\,, 1\geqslant t\geqslant 1/2\\ \Psi(t)<\Psi(1/2)\,, 1/2>t\geqslant 0
    \end{cases}\,.
\end{equation}

In addition, we show that $\Psi(\cdot)$ must satisfy $\Psi(t)+\Psi(1-t)\geqslant 2\Psi(1/2)\,,\forall t\in [0,1]$ for ensuring that the Nash solution is mixed when there is no Condorcet winning response. We consider the case where $n=4$ and the game with the payoff in Table \ref{table:diversity} for any $t_1,t_2>1/2$.

\begin{table}[htbp]
        \caption{Payoff matrix with four responses $\{y_1,y_2,y_3,y_4\}$.}
        \label{table:diversity}
        \centering
        \scalebox{1.0}{
        \begin{tabular}{c | c  c  c  c }
        \toprule
              $\Psi(\cP(y\succ y'))$ &  
              $y'=y_1$ & 
              $y'=y_2$ &
              $y'=y_3$ &
              $y'=y_4$ 
             \\ \midrule
             $y=y_1$ &
             $\Psi(1/2)$ & $\Psi(t_1)$ & $\Psi(1-t_1)$ & $\Psi(t_2)$ \\
             $y=y_2$ &
             $\Psi(1-t_1)$ & $\Psi(1/2)$ & $\Psi(t_1)$ & $\Psi(t_2)$\\
             $y=y_3$ &
             $\Psi(t_1)$ & $\Psi(1-t_1)$ & $\Psi(1/2)$ & $\Psi(t_2)$ \\
             $y=y_4$ &
             $\Psi(1-t_2)$ & $\Psi(1-t_2)$ & $\Psi(1-t_2)$ & $\Psi(1/2)$\\
             \bottomrule
        \end{tabular}}
    \end{table}

Notice that if $\Psi(t_1)+\Psi(1-t_1)+\Psi(1/2)\leqslant 3\Psi(1-t_2)$, we have 
\[
\mathcal{P}_\Psi(\bm{\delta}_4,\bm{\pi})= (\pi_1+\pi_2+\pi_3)\Psi(1-t_2) + \pi_4\Psi(1/2) \Longrightarrow \frac{\bm{\delta}_1+\bm{\delta}_2+\bm{\delta}_3}{3}\in\arg\min_{\bm{\pi}}\mathcal{P}_\Psi(\bm{\delta}_4,\bm{\pi})\,,
\]
and 
\[\begin{aligned}
\mathcal{P}_\Psi\left(\bm{\pi},\frac{\bm{\delta}_1+\bm{\delta}_2+\bm{\delta}_3}{3}\right)=&\ (\pi_1+\pi_2+\pi_3)\cdot\frac{\Psi(1/2)+\Psi(t_1)+\Psi(1-t_1)}{3}+\pi_4\Psi(1-t_2) &\ \\
&\ \Longrightarrow \bm{\delta}_4\in\arg\max_{\bm{\pi}}\mathcal{P}_\Psi\left(\bm{\pi},\frac{\bm{\delta}_1+\bm{\delta}_2+\bm{\delta}_3}{3}\right)\,.
\end{aligned}\]

Therefore, $(\bm{\delta}_4,(\bm{\delta}_1+\bm{\delta}_2+\bm{\delta}_3)/3)$ is a Nash equilibrium, which causes a contradiction to the fact that the Nash solution is mixed. Hence, we have $\Psi(t_1)+\Psi(1-t_1)+\Psi(1/2)>3\Psi(1-t_2)$ for any $t_1,t_2>1/2$. Let $t_2\to 1/2$, we have $\Psi(t)+\Psi(1-t)\geqslant 2\Psi(1/2)$ for any $t\in [0,1]$. Hence, combining \eqref{eq:diversity-condorcet-condition}, we have shown that the necessary condition for ensuring that the Nash solution is mixed is:
\begin{equation}\label{eq:diversity-necessary-condition}
    \Psi(t)+\Psi(1-t)\geqslant 2\Psi(1/2)\,,\forall t\in [0,1]\text{ and }\begin{cases}
        \Psi(t)\geqslant\Psi(1/2)\,, 1\geqslant t\geqslant 1/2\\ \Psi(t)<\Psi(1/2)\,, 1/2>t\geqslant 0
    \end{cases}\,.
\end{equation}
Next, we prove that the condition \eqref{eq:diversity-necessary-condition} is also sufficient. Suppose that $(\bm{\delta}_{i^\star},\bm{\pi}^\star)$ is a Nash equilibrium, then we have 
\[\begin{aligned}
\mathcal{P}_\Psi(\bm{\delta}_{i^\star},\bm{\pi}^\star)=\max_{\bm{\pi}}\mathcal{P}_\Psi(\bm{\pi},\bm{\pi}^\star)\geqslant &\ \mathcal{P}_\Psi(\bm{\pi}^\star,\bm{\pi}^\star)\\
    =&\ \sum_{i=1}^n\sum_{j=1}^n\pi^\star_i\pi^\star_j\Psi_{ij}=\frac{1}{2}\sum_{i=1}^n\sum_{j=1}^n\pi_i^\star\pi_j^\star\left(\Psi_{ij}+\Psi_{ji}\right)\geqslant \Psi(1/2)\,.
\end{aligned}\]
However, notice that for any $j$, we have 
\[
\mathcal{P}_\Psi(\bm{\delta}_{i^\star},\bm{\pi}^\star)=\min_{\bm{\pi}}\mathcal{P}_\Psi(\bm{\delta}_{i^\star},\bm{\pi})\leqslant \mathcal{P}_\Psi(\bm{\delta}_{i^\star},\bm{\delta}_j)=\Psi_{i^\star j}\,.
\]
As there is no Condorcet winning response, there must exist $j^\star$ such that $\mathcal{P}(y_{i^\star}\succ y_{j^\star})<1/2$, thus $\Psi_{i^\star j^\star}<\Psi(1/2)$. Hence, $\Psi(1/2)\leqslant \mathcal{P}_\Psi(\bm{\delta}_{i^\star},\bm{\pi}^\star)\leqslant \Psi_{i^\star j^\star}<\Psi(1/2)$, which causes a contradiction. Therefore, the Nash solution must be mixed.

\section{Smith Consistency}\label{sec:smith}

In this section, we extend the discussion of Condorcet consistency to Smith consistency. First, we define the Smith set and Smith consistency. Next, we present Theorem \ref{thm:smith}, which provides the necessary and sufficient condition for the mapping $\Psi$ to ensure Smith consistency. Finally, we highlight that Smith-consistent methods inherently preserve the diversity present in human preferences and discuss the continuity assumption in Theorem \ref{thm:smith}. We also discuss the continuity assumption used in Theorem \ref{thm:smith} through Example \ref{example:continuous-Smith}. The proofs of Theorem \ref{thm:smith} and Example \ref{example:continuous-Smith} are provided in Section \ref{sec:proof-smith} and Section \ref{sec:proof-continuous-Smith}, respectively.

Condorcet consistency only ensures the method capture the right response when there exists a Condorcet winning response. In general, when there is no Condorcet winning response, we can expect that there might be a set of responses satisfying similar property, generalizing Definition \ref{def:Condorcet winning response}.
Under Assumption \ref{ass:strict_preference}, 
\citet{liu2025statistical} revealed a more detailed decomposition of the preference structure. 
Specifically, the set of responses can be partitioned into distinct groups $S_1, \ldots, S_k$, where every response in $S_i$ is preferred over all responses in $S_j$ for $i < j$, summarized in the following theorem. 

\begin{theorem}[\cite{liu2025statistical}]\label{thm:preference-decomposition}
    Under Assumption~\ref{ass:strict_preference}, the set of responses can be partitioned into disjoint subsets $S_1,\ldots,S_k$ such that: 
\begin{enumerate}
\item Each $S_i$ either forms a Condorcet cycle or is a single response.
\item For any $j>i$, any response $y\in S_i$ and $y'\in S_j$, $\cP(y\succ y')>\frac{1}{2}$.
\end{enumerate}
Moreover, this decomposition is unique.
\end{theorem}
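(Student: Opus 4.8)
The plan is to prove Theorem~\ref{thm:preference-decomposition} by defining the partition through the \emph{beating} relation, establishing a total order on equivalence-like classes, and then arguing uniqueness. Concretely, under Assumption~\ref{ass:strict_preference} every pair of distinct responses has a strict winner, so we obtain a tournament on $\{y_1,\ldots,y_n\}$: draw a directed edge $y\to y'$ whenever $\mathcal{P}(y\succ y')>1/2$. The first step is to define an equivalence relation by declaring $y\sim y'$ if $y=y'$ or if there exist directed paths both from $y$ to $y'$ and from $y'$ to $y$ in this tournament, i.e.\ $y$ and $y'$ lie on a common directed cycle (they are mutually reachable). This is exactly the strongly connected component decomposition of the tournament digraph. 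I would verify reflexivity, symmetry, and transitivity directly, which are immediate from concatenating directed paths.

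Next I would show each strongly connected component $S_i$ with $|S_i|>1$ forms a Condorcet cycle and each singleton component is a single response, giving item~1. The key structural fact is that the condensation of any digraph (the quotient by strong connectivity) is a directed acyclic graph; for a tournament this condensation is itself a tournament on the components, and an acyclic tournament admits a unique total order. This is where item~2 comes from: after indexing the components $S_1,\ldots,S_k$ consistently with that topological/total order, I would show that for $i<j$ \emph{every} $y\in S_i$ beats \emph{every} $y'\in S_j$. The subtle point deserving care is that strong connectivity only guarantees no edge from a later component back to an earlier one along \emph{some} path, whereas item~2 asserts a pointwise domination across all pairs. I would close this gap by arguing that between two distinct components the induced bipartite tournament cannot have edges in both directions: if $y\to y'$ and $y''\leftarrow y'''$ with the four vertices split across $S_i$ and $S_j$, one can stitch together a directed cycle through both components (using internal strong connectivity), contradicting that they are distinct components. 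Hence all cross edges between a fixed pair of components point the same way, yielding the uniform domination in item~2.

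For uniqueness of the decomposition, I would argue that any partition satisfying items~1 and~2 must coincide with the strongly connected component decomposition. The total-domination condition in item~2 forces the ordering of blocks to be the unique topological order of the condensation, and the requirement that each block be either a single response or a Condorcet cycle (hence internally strongly connected) pins down the blocks themselves: two responses in a common Condorcet cycle are mutually reachable and so cannot be separated into different blocks, while responses in different strongly connected components cannot be merged without violating the strict cross-domination of item~2. Combining these forces the partition to equal the canonical strongly connected component partition, giving uniqueness.

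The main obstacle I anticipate is the careful verification that cross-component edges are unidirectional (the pointwise rather than path-wise domination in item~2), since this is precisely the place where the tournament hypothesis and Assumption~\ref{ass:strict_preference} must be combined with the cycle-stitching argument; the rest reduces to standard facts about strongly connected components and condensations of digraphs. Since this theorem is cited from \citet{liu2025statistical}, I would expect the paper to either reference that proof directly or reproduce this tournament-condensation argument, so I would keep the exposition self-contained but brief, emphasizing the stitching lemma as the crux.
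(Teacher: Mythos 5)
Note first that the paper you are being compared against does not actually prove this statement: it is imported with a citation from \citet{liu2025statistical}, and no proof of it appears anywhere in the text. So your proposal can only be judged on its own terms. On those terms, your overall route is the standard one and is essentially sound: under Assumption \ref{ass:strict_preference} the comparison relation is a tournament; mutual reachability gives the strongly-connected-component (SCC) decomposition; your stitching lemma is correct (two oppositely oriented cross edges between distinct components, closed up through internal strong connectivity, would give a closed directed walk meeting both components, contradicting maximality of SCCs), so all edges between a fixed pair of components point the same way and the condensation is an acyclic tournament, hence totally ordered; and your uniqueness argument works, since any block that is a Condorcet cycle is strongly connected and therefore sits inside a single SCC, while item 2 forces block indices to be non-decreasing along directed paths, so mutually reachable responses can never lie in different blocks.

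The genuine gap is item 1. You assert without argument that every SCC with $|S_i|>1$ ``forms a Condorcet cycle.'' In \citet{liu2025statistical} a Condorcet cycle on $m$ responses means a directed cycle passing through \emph{all} $m$ of them, i.e.\ a Hamiltonian directed cycle of the induced subtournament. Strong connectivity only gives mutual reachability, which is weaker on its face; upgrading it to a Hamiltonian cycle is exactly Camion's theorem (every strongly connected tournament on at least three vertices has a Hamiltonian directed cycle, provable by the standard argument that inserts or absorbs one outside vertex at a time into an existing cycle). This is the mathematical heart of item 1, and your proposal treats it as immediate while spending its care on the (easier) cross-domination step. A minor point folded into the same step: in a tournament no SCC can have exactly two vertices, which is why the dichotomy ``single response or Condorcet cycle'' is exhaustive; this too deserves a sentence. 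With Camion's theorem supplied, either by citation or with its short insertion proof, your argument is complete.
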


When $|S_1|=1$, the response in $S_1$ is exactly the Condorcet winning response.
Thus, $S_1$ is the generalization of Condorcet winning response, and is referred as the Condorcet winning set in \citet{liu2025statistical}. 
Traditionally, a subset with such property is also known as the Smith set in the literature of social choice theory \citep{shoham2008multiagent}.
Here we choose to adopt the name Smith set to distinguish with the concept of Condorcet winning response.
Given this decomposition, it is natural to desire that an aligned LLM adopts a strategy supported exclusively on the top group $S_1$, as any response outside $S_1$ is strictly less preferred than any response inside $S_1$. 
This desirable property is referred to as Smith consistency:

\begin{definition}[Smith Consistency]\label{def: Smith consistency}
    Problem \eqref{eq:gen-nlhf} is Smith consistent if the support of any Nash solution is contained in the Smith set $S_1$.
\end{definition}

\citet{liu2025statistical} showed that the original \texttt{NLHF} payoff, which corresponds to the case where $\Psi(t)=t$, is Smith consistent. Here, we investigate this question for a general mapping $\Psi$:
\begin{center}
\emph{ Which choices of $\Psi$ ensure Smith consistency?}
\end{center}

Here, similar to Theorem \ref{thm:diversity}, we answer this question in Theorem \ref{thm:smith} for mappings that is continuous at $1/2$. The proof is provided in Section \ref{sec:proof-smith}. 

\begin{theorem}\label{thm:smith}
Suppose that the mapping $\Psi(\cdot)$ is continuous at $1/2$, problem \eqref{eq:gen-nlhf} is Smith consistent if and only if $\Psi(\cdot)$ satisfies
\[
\Psi(t)+\Psi(1-t)=2\Psi(1/2)\,,\forall \, t\in [0,1]\text{ and }\Psi(t)<\Psi(1/2)\,,\forall\, 0\leqslant t< 1/2\,.
\]
\end{theorem}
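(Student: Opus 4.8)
The plan is to prove the equivalence by splitting it into sufficiency and necessity, and throughout to exploit the observation that the equality $\Psi(t)+\Psi(1-t)=2\Psi(1/2)$ is exactly the statement that the shifted payoff $\tilde\Psi_{ij}:=\Psi_{ij}-\Psi(1/2)$ is \emph{antisymmetric}, $\tilde\Psi_{ij}+\tilde\Psi_{ji}=0$. Since shifting $\Psi$ by a constant leaves the Nash solutions unchanged, under the claimed conditions the game $\tilde{\mathcal P}_\Psi(\bm\pi_1,\bm\pi_2):=\mathcal P_\Psi(\bm\pi_1,\bm\pi_2)-\Psi(1/2)$ is symmetric with value $0$; combining the equality with $\Psi(t)<\Psi(1/2)$ for $t<1/2$ also gives that $\tilde\Psi_{ij}>0$ precisely when $y_i$ beats $y_j$. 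This reduces the sufficiency direction to a clean statement about symmetric tournament games, and it pinpoints where the antisymmetry (i.e.\ the equality rather than a mere inequality) is indispensable.

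For sufficiency I would first produce an optimal strategy supported on the Smith set. Applying the minimax theorem to the antisymmetric subgame restricted to $S_1$ yields a maximin strategy $\bm\sigma$ with $\operatorname{supp}(\bm\sigma)\subseteq S_1$; since every response in $S_1$ beats every response outside $S_1$ by Theorem \ref{thm:preference-decomposition}, one checks $\tilde{\mathcal P}_\Psi(\bm\sigma,\bm\delta_j)\geqslant 0$ for every $y_j$, so $\bm\sigma$ is globally maximin and, by symmetry of the game, also minimax for the second player. Now let $\bm\pi^\star$ be any Nash solution; it is a maximin strategy, so $(\bm\pi^\star,\bm\sigma)$ is a Nash equilibrium. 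Every response in $\operatorname{supp}(\bm\pi^\star)$ must be a best response to $\bm\sigma$ and hence attain the value $0$; but any $y_i\notin S_1$ is beaten by all of $S_1\supseteq\operatorname{supp}(\bm\sigma)$, giving $\tilde{\mathcal P}_\Psi(\bm\delta_i,\bm\sigma)<0$, strictly below the value. Therefore $\operatorname{supp}(\bm\pi^\star)\subseteq S_1$, which is Smith consistency.

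For necessity, Smith consistency forces the unique Nash solution to be $\bm\delta_{y^\star}$ whenever a Condorcet winner $y^\star$ exists (since then $S_1=\{y^\star\}$), i.e.\ Condorcet consistency, so Theorem \ref{thm:condorcet-consistent} already delivers $\Psi(t)<\Psi(1/2)$ for $t<1/2$. The inequality $\Psi(t)+\Psi(1-t)\geqslant 2\Psi(1/2)$ I would obtain by reusing the four-response game of Table \ref{table:diversity}: taking the cycle parameter $t_1=t_0$ with $\Psi(t_0)+\Psi(1-t_0)<2\Psi(1/2)$ and letting the domination parameter $t_2\to 1/2^+$, continuity of $\Psi$ at $1/2$ yields $3\Psi(1-t_2)>\Psi(t_0)+\Psi(1-t_0)+\Psi(1/2)$, which by the computation in the proof of Theorem \ref{thm:diversity} makes the dominated response $y_4$ a \emph{pure} Nash solution whose support lies outside $S_1=\{y_1,y_2,y_3\}$, contradicting Smith consistency.

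The main obstacle is the reverse inequality $\Psi(t)+\Psi(1-t)\leqslant 2\Psi(1/2)$, because a single dominated response can never enter a Nash solution once the game value exceeds $\Psi(1/2)$, so the four-response gadget is useless here. To force it I would stack two symmetric three-cycles $S_1=\{y_1,y_2,y_3\}$ and $S_2=\{y_4,y_5,y_6\}$, each with internal parameter $t_0$ satisfying $\Psi(t_0)+\Psi(1-t_0)>2\Psi(1/2)$, every element of $S_1$ beating every element of $S_2$ with cross parameter $s$ chosen via continuity at $1/2$ so that $\Psi(1/2)<\Psi(s)<\tfrac13\bigl(\Psi(t_0)+\Psi(1-t_0)+\Psi(1/2)\bigr)$. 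Exploiting the simultaneous cyclic symmetry of the two blocks, I would seek an equilibrium $\bm\pi_1=(\alpha,\alpha,\alpha,\beta,\beta,\beta)$, $\bm\pi_2=(\gamma,\gamma,\gamma,\eta,\eta,\eta)$; writing $A:=\Psi(t_0)+\Psi(1-t_0)+\Psi(1/2)$, player one's indifference between blocks reads $\gamma(A-3\Psi(1-s))=\eta(A-3\Psi(s))$ and player two's reads $\alpha(A-3\Psi(s))=\beta(A-3\Psi(1-s))$, where the choice of $s$ makes both factors $A-3\Psi(s)$ and $A-3\Psi(1-s)$ strictly positive. Hence both ratios are positive, producing a genuine equilibrium with $\beta>0$, so player one places mass on $S_2$ and Smith consistency fails. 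The delicate points I would check are that this profile has Smith set exactly $S_1$, that the indifference system admits a solution with all of $\alpha,\beta,\gamma,\eta>0$, and that the candidate strategies are mutual best responses rather than merely indifferent across their supports.
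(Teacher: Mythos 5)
Your proposal is correct. The necessity half is essentially the paper's own argument: the same four-response gadget (a three-cycle plus a dominated response, Table \ref{table:necessary-smith-geqslant}) yields $\Psi(t)+\Psi(1-t)\geqslant 2\Psi(1/2)$, and the same six-response two-block construction (Table \ref{table:necessary-smith-leqslant}) with exactly your indifference equations yields the reverse inequality; your phrasing is the contrapositive of the paper's, but the gadgets, the equations $\gamma(A-3\Psi(1-s))=\eta(A-3\Psi(s))$ and $\alpha(A-3\Psi(s))=\beta(A-3\Psi(1-s))$, and the limit $s\to 1/2$ under continuity are identical. (One harmless over-specification: you demand $\Psi(1/2)<\Psi(s)$, but only $A-3\Psi(s)>0$ and $A-3\Psi(1-s)>0$ are actually needed, and these follow from continuity and the Condorcet condition alone.) The sufficiency half is where you genuinely diverge. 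The paper argues directly: antisymmetry forces $\mathcal{P}_\Psi(\bm{\pi},\bm{\pi})=\Psi(1/2)$ for every $\bm{\pi}$, hence the equilibrium value is $\Psi(1/2)$, and a two-case analysis---using the renormalized strategy $\widetilde{\bm{\pi}}_2^\star$ supported on $S_1$ as a deviation---rules out any mass outside $S_1$. You instead invoke structural facts about skew-symmetric zero-sum games: the subgame on $S_1$ has value $0$ and an optimal strategy $\bm{\sigma}$, which remains optimal in the full game because $S_1$ strictly beats $S_1^{c}$ (here you correctly need the strict inequality $\Psi(t)>\Psi(1/2)$ for $t>1/2$, which follows from the equality condition together with $\Psi(1-t)<\Psi(1/2)$); interchangeability of optimal strategies in zero-sum games makes $(\bm{\pi}^\star,\bm{\sigma})$ an equilibrium, and the support/best-response principle then excludes any response outside $S_1$, since such a response earns strictly negative shifted payoff against $\bm{\sigma}$. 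Your route is shorter and more conceptual, at the cost of citing (or proving) three standard facts---existence and zero value for symmetric games, interchangeability of optimal strategies, and the support best-response property---whereas the paper's case analysis is fully self-contained and elementary.
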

The first condition $\Psi(t)+\Psi(1-t)=2\Psi(1/2)$ says nothing but the zero-sum game formed by problem \eqref{eq:gen-nlhf} is equivalent to a symmetric two-player zero-sum game\footnote{This can be seen by shifting the payoff by $\Psi(1/2)$, which leaves the Nash solution unchanged.} \citep{duersch2012pure}.
By definition, Smith consistency implies Condorcet consistency because when there is a Condorcet winning response, $S_1$ is exactly the set whose only element is the Condorcet winning response. Thus, the second condition is just a reduction of the condition in Theorem \ref{thm:condorcet-consistent} under the first condition.
It is easy to see $\Psi(t)=t$ satisfies these conditions, and thus our result generalize Theorem 3.6 in \citet{liu2025statistical}.
More interestingly, $\Psi(t)=\log(t/(1-t))$ also satisfies these conditions.
This implies that 
\begin{equation*}
\max_{\bm{\pi}} \min_{\bm{\pi}^{\prime}} \mathbb{E}_{x\sim\rho} \left[\mathbb{E}_{y\sim \bm{\pi}(\cdot\mid x)}\mathbb{E}_{y^\prime\sim \bm{\pi}^\prime(\cdot\mid x)}\left[\log\left(\frac{\mathcal{P}(y\succ y^{\prime}\mid x)}{\mathcal{P}(y^{\prime}\succ y\mid x)}\right)\right]\right],
\end{equation*}
which is a natural generalization of standard \texttt{RLHF} when human preferences does not satisfy BTL model, is also Smith consistent.

The set of choices for $\Psi$ that ensure Smith consistency is quite broad. We can easily construct such a $\Psi$ by first defining $\Psi(t)$ on $[0, 1/2]$ to satisfy $\Psi(t) < \Psi(1/2)$ for all $t \in [0, 1/2)$, and then extending it to $[0, 1]$ by setting $\Psi(t) = 2\Psi(1/2) - \Psi(1-t)$ for all $t \in (1/2, 1]$.
Moreover, as discussed in Section \ref{sec:condorcet}, a practical preference model $\mathcal{P}_\theta(y \succ y^\prime)$ can be seen as a mapping of the ground truth preference via $\Psi$, i.e., $\Psi(\mathcal{P}(y \succ y^\prime))$. Thus, the first condition in Theorem \ref{thm:smith} requires the preference model to satisfy $\mathcal{P}_\theta(y \succ y^\prime) + \mathcal{P}_\theta(y^\prime \succ y) = 1$, with $\mathcal{P}_\theta(y \succ y) = 1/2$ enforced.
However, several practically used preference models \citep{munos2024nash, jiang-etal-2023-llm, wu2024self} do not guarantee this condition, which may cause the aligned LLM strategy to fail to satisfy Smith consistency.

As any mapping satisfying the condition in Theorem \ref{thm:smith} also satisfies the condition in Theorem \ref{thm:diversity}, we obtain the following corollary:
\begin{corollary}\label{cor:smith diverse}
    Suppose that the mapping $\Psi(\cdot)$ is continuous at $1/2$. Then if problem \eqref{eq:gen-nlhf} is Smith consistent, any Nash solution is also mixed.
\end{corollary}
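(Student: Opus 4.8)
The plan is to derive Corollary \ref{cor:smith diverse} as an almost immediate consequence of the two characterization theorems already established, namely Theorem \ref{thm:smith} and Theorem \ref{thm:diversity}. The key observation is purely logical: I want to show that the hypothesis ``problem \eqref{eq:gen-nlhf} is Smith consistent'' forces the hypothesis of Theorem \ref{thm:diversity} to hold, after which the conclusion of Theorem \ref{thm:diversity}---that every Nash solution is mixed in the absence of a Condorcet winning response---can be invoked directly. So the proof is a matter of matching up the two algebraic conditions on $\Psi$ and then handling the case where a Condorcet winning response does exist.

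First I would note that both theorems share the standing assumption that $\Psi(\cdot)$ is continuous at $1/2$, so there is no mismatch in regularity hypotheses. Next, by Theorem \ref{thm:smith}, Smith consistency is equivalent to $\Psi$ satisfying $\Psi(t)+\Psi(1-t)=2\Psi(1/2)$ for all $t\in[0,1]$ together with $\Psi(t)<\Psi(1/2)$ for all $0\leqslant t<1/2$. The central step is to verify that this condition implies the condition \eqref{eq:thm-diversity} appearing in Theorem \ref{thm:diversity}, namely $\Psi(t)+\Psi(1-t)\geqslant 2\Psi(1/2)$ for all $t\in[0,1]$ and $\Psi(t)<\Psi(1/2)$ for all $0\leqslant t<1/2$. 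This is the content of the sentence preceding the corollary (``any mapping satisfying the condition in Theorem \ref{thm:smith} also satisfies the condition in Theorem \ref{thm:diversity}''): the equality $\Psi(t)+\Psi(1-t)=2\Psi(1/2)$ trivially entails the inequality $\Psi(t)+\Psi(1-t)\geqslant 2\Psi(1/2)$, while the second (strict-inequality) condition is literally identical in both statements. Thus the hypotheses of Theorem \ref{thm:diversity} are satisfied.

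I would also need to remark that Smith consistency implies Condorcet consistency, which is exactly the remaining hypothesis of Theorem \ref{thm:diversity} (``Assuming the Condorcet consistency of problem \eqref{eq:gen-nlhf}''). This is already noted in the discussion after Theorem \ref{thm:smith}: when a Condorcet winning response exists, the Smith set $S_1$ is the singleton containing precisely that response, so Smith consistency---which forces every Nash solution to be supported on $S_1$---coincides with Condorcet consistency in that case. With both the algebraic condition \eqref{eq:thm-diversity} and the Condorcet-consistency hypothesis verified, Theorem \ref{thm:diversity} applies and yields that every Nash solution is mixed whenever no Condorcet winning response exists. Finally, to cover the remaining case, I would observe that when a Condorcet winning response \emph{does} exist the statement of the corollary is about the no-Condorcet-winner regime only insofar as diversity is concerned, but the claim ``any Nash solution is also mixed'' should be read in the same no-Condorcet-winner sense inherited from Theorem \ref{thm:diversity}; since Smith consistency reduces to Condorcet consistency there, no separate argument is needed.

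I do not anticipate a genuine obstacle here, since the corollary is essentially a syllogism chaining Theorem \ref{thm:smith} into Theorem \ref{thm:diversity}. The only subtlety worth stating carefully is the direction of the implication between the equality condition and the inequality condition---one must confirm that the \emph{stronger} equality $\Psi(t)+\Psi(1-t)=2\Psi(1/2)$ of Smith consistency implies the \emph{weaker} inequality $\geqslant$ required for diversity, and not the reverse---together with the observation that the strict second condition is shared verbatim. Given these, the proof is a two-line invocation of the prior results.
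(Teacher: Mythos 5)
Your proposal is correct and follows exactly the paper's own route: the paper derives Corollary \ref{cor:smith diverse} from the single observation that any $\Psi$ satisfying the Smith-consistency characterization of Theorem \ref{thm:smith} (equality $\Psi(t)+\Psi(1-t)=2\Psi(1/2)$ plus the strict condition below $1/2$) automatically satisfies the condition \eqref{eq:thm-diversity} of Theorem \ref{thm:diversity}, with Smith consistency supplying the Condorcet-consistency hypothesis. Your additional care about the implication direction (equality $\Rightarrow$ inequality) and the no-Condorcet-winner reading of ``mixed'' matches the paper's intended interpretation, so nothing is missing.
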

This shows that when $|S_1|>1$, the Nash solution to problem \eqref{eq:gen-nlhf} with any $\Psi$ such that Smith consistency holds will not only support on $S_1$ but also be a mixed strategy on $S_1$ without collapsing to a single response. As a conclusion, a Smith consistent method can preserve the diversity inherent in human preferences, at least partially.

Lastly, we discuss what happens if $\Psi$ is not continuous at $1/2$. Choices of mappings $\Psi$ are considerably richer and consequently harder to characterize when we relax the continuity assumption.
The following example shows that the piece-wise constant mapping in Example \ref{example:continuous} also ensures Smith consistency.

\begin{example}\label{example:continuous-Smith}
Let $M_{-}<\Psi(\frac{1}{2})< M_{+}$, and we take 
\[
\Psi(t)=\begin{cases}
M_{-} & 0\leqslant t< 1/2 \\
\Psi(1/2)   & t=1/2 \\
M_{+} & 1/2< t\leqslant 1 
\end{cases}\,.
\]
Then problem \eqref{eq:gen-nlhf} is Smith consistent. The proof is provided in Section \ref{sec:proof-continuous-Smith}.
\end{example}

\subsection{Proof of Theorem \ref{thm:smith}}
\label{sec:proof-smith}

First, we show that the necessary condition for ensuring that problem \eqref{eq:gen-nlhf} is Smith consistent is:
\begin{equation}\label{eq:necessary-smith}
    \Psi(t)+\Psi(1-t)=2\Psi(1/2)\,,\forall t\in [0,1]\text{ and }\begin{cases}
        \Psi(t)\geqslant \Psi(1/2)\,,1\geqslant t\geqslant 1/2\\
        \Psi(t)<\Psi(1/2)\,, 0\leqslant t< 1/2
    \end{cases}\,.
\end{equation}
First, Condorcet consistency must hold when Smith consistency holds. According to Theorem \ref{thm:condorcet-consistent}, we have 
\[\begin{cases}
    \Psi(t)\geqslant \Psi(1/2)\,,1\geqslant t\geqslant 1/2\\
    \Psi(t)<\Psi(1/2)\,, 0\leqslant t< 1/2
\end{cases}\]
Next, we show that when $\Psi(\cdot)$ is continuous at $1/2$, $\Psi(\cdot)$ must satisfy $\Psi(t)+\Psi(1-t)\geqslant 2\Psi(1/2)$ (Lemma \ref{lem:smith-geqslant}) and $\Psi(t)+\Psi(1-t)\leqslant 2\Psi(1/2)$ (Lemma \ref{lem:smith-leqslant}) for any $t\in [0,1]$. Therefore, combining the two results together, we obtain the condition \eqref{eq:necessary-smith}.

\begin{lemma}\label{lem:smith-geqslant}
    When $\Psi(\cdot)$ is continuous at $1/2$. Achieving Smith consistency only if 
    \[\Psi(t)+\Psi(1-t)\geqslant 2\Psi(1/2)\,,\forall t\in [0,1]\,.\]
\end{lemma}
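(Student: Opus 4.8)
The plan is to prove the necessity of $\Psi(t)+\Psi(1-t)\geqslant 2\Psi(1/2)$ by constructing, for the sake of contradiction, a specific preference structure in which Smith consistency would be violated if the inequality failed at some point $t_0>1/2$. Suppose toward contradiction that $\Psi(t_0)+\Psi(1-t_0)< 2\Psi(1/2)$ for some $t_0\in(1/2,1]$. I would design a game on a small number of responses whose unique Smith set $S_1$ is genuinely a Condorcet cycle (so $|S_1|>1$), together with one ``weak'' response $y_w$ that every member of $S_1$ beats with the particular preference value $t_0$, and which beats every member of $S_1$ with value $1-t_0$. The diversity example already in the excerpt (Table~\ref{table:diversity}) is the natural template: a $3$-cycle $\{y_1,y_2,y_3\}$ forming the Smith set plus a fourth response $y_4$, except that now I want $y_4$ to sit \emph{inside} $S_1$ or be connected to $S_1$ in a way that lets me force a Nash solution placing mass outside $S_1$.

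Concretely, I would arrange the payoffs so that the symmetric-cycle structure on $\{y_1,y_2,y_3\}$ makes the uniform mixture over these three the natural candidate for the first player, but the failure of the inequality at $t_0$ makes it profitable to deviate toward the outside response. The key computation is to evaluate $\mathcal{P}_\Psi$ against candidate strategies: I would show that if $\Psi(t_0)+\Psi(1-t_0)<2\Psi(1/2)$, then playing $y_4$ (or a strategy with mass on $y_4$) yields a strictly higher worst-case payoff than any strategy supported only on $S_1$, so some Nash solution must place positive mass outside $S_1$, contradicting Smith consistency. The inequality $\Psi(t_0)+\Psi(1-t_0)<2\Psi(1/2)$ is precisely what lets the average of beating-and-being-beaten be worse than the tie value $\Psi(1/2)$, which is what breaks the cyclic balance that normally keeps the Nash support inside the cycle.

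The role of the continuity assumption at $1/2$ is to pass from a strict inequality proven for $t_0$ bounded away from $1/2$ to the full closed-interval statement: I would first establish the inequality for all $t$ in a range where the constructed preferences are strictly admissible (respecting Assumption~\ref{ass:strict_preference}, so no beat-value equals exactly $1/2$), and then take a limit $t\to 1/2^+$ using continuity to cover the boundary case $t=1/2$ itself. This mirrors the limiting argument ``$t_2\to 1/2$'' already used in the proof of Theorem~\ref{thm:diversity} in Section~\ref{sec:proof-diversity}.

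The main obstacle I anticipate is the bookkeeping of verifying that the constructed game genuinely has $S_1$ equal to a Condorcet cycle of size greater than one (so that Smith consistency makes a nontrivial demand excluding $y_4$), while simultaneously engineering the payoff entries so that deviation to the outside response is strictly beneficial under the assumed failure of the inequality. I expect that symmetrizing the cycle (assigning each intra-cycle edge the same $\{\Psi(t_1),\Psi(1-t_1)\}$ pattern for a fixed $t_1>1/2$) will make the candidate Nash strategy on $S_1$ explicitly computable as the uniform mixture, after which the comparison reduces to showing $\tfrac{1}{3}\bigl(\Psi(t_0)+\Psi(1-t_0)+\Psi(1/2)\bigr)$ versus $\Psi(1-t_0)$ or the analogous averaged quantity, exactly the kind of three-term inequality appearing in the Theorem~\ref{thm:diversity} proof. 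Getting the direction of every min and max correct in the saddle-point verification, and confirming that no alternative Nash solution secretly stays inside $S_1$, is where the care is needed.
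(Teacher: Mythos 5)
Your construction is the same one the paper uses (Table~\ref{table:necessary-smith-geqslant}: a three-cycle Smith set $\{y_1,y_2,y_3\}$ plus a dominated outsider $y_4$), and your contradiction mechanism---show that $\bm{\delta}_4$ attains a strictly larger worst-case payoff than any strategy supported on $S_1$, hence every Nash solution puts mass outside $S_1$---is just the max--min restatement of the paper's explicit equilibrium pair $(\bm{\delta}_4,(\bm{\delta}_1+\bm{\delta}_2+\bm{\delta}_3)/3)$, run in contrapositive form. So the overall route is essentially the paper's.

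However, one step as you state it would fail: you tie the outsider's preference value to the same $t_0$ at which the inequality is assumed to fail, and your key comparison becomes $\tfrac{1}{3}\bigl(\Psi(t_0)+\Psi(1-t_0)+\Psi(1/2)\bigr)\leqslant \Psi(1-t_0)$. This does \emph{not} follow from $\Psi(t_0)+\Psi(1-t_0)<2\Psi(1/2)$. For instance, take $\Psi(1/2)=0$, $\Psi(t_0)=1/2$, $\Psi(1-t_0)=-1$ (consistent with the Condorcet-consistency conditions and with your hypothesis, since $1/2-1<0$): the averaged cycle value is $-1/6$, which strictly exceeds $\Psi(1-t_0)=-1$, so the uniform strategy on $S_1$ still beats $\bm{\delta}_4$ in worst case and no contradiction arises. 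The fix---and what the paper actually does---is to decouple the two parameters: keep the cycle edges at $t_0$ but give the outsider edges a separate value $t_2>1/2$, and use continuity at $1/2$ to choose $t_2$ so close to $1/2$ that $\Psi(1-t_2)>\tfrac{1}{3}\bigl(\Psi(t_0)+\Psi(1-t_0)+\Psi(1/2)\bigr)$, which is possible precisely because the right-hand side is $<\Psi(1/2)$ under your contradiction hypothesis. Relatedly, your account of the continuity assumption (``to cover the boundary case $t=1/2$'') misidentifies its role: at $t=1/2$ the claimed inequality is trivially an equality, and no limit is needed there. Continuity is needed only because Assumption~\ref{ass:strict_preference} forbids realizing $t_2=1/2$ inside the game, so the outsider's payoff $\Psi(1-t_2)$ can only be pushed toward $\Psi(1/2)$ in the limit $t_2\to 1/2$---exactly the limiting step in the proof of Theorem~\ref{thm:diversity} that you cite, but applied to the auxiliary parameter $t_2$, not to the variable $t$ of the final statement.
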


\begin{proof}[Proof of Lemma \ref{lem:smith-geqslant}]
    We consider the case where $n=4$ and the game with the payoff in Table \ref{table:necessary-smith-geqslant} for any $t_1,t_2>1/2$. Notice that if $\Psi(t_1)+\Psi(1-t_1)+\Psi(1/2)\leqslant 3\Psi(1-t_2)$, we have 
    \[
    \mathcal{P}_\Psi(\bm{\delta}_4,\bm{\pi})= (\pi_1+\pi_2+\pi_3)\Psi(1-t_2) + \pi_4\Psi(1/2) \Longrightarrow \frac{\bm{\delta}_1+\bm{\delta}_2+\bm{\delta}_3}{3}\in\arg\min_{\bm{\pi}}\mathcal{P}_\Psi(\bm{\delta}_4,\bm{\pi})\,,
    \]
    and 
    \[\begin{aligned}
    \mathcal{P}_\Psi\left(\bm{\pi},\frac{\bm{\delta}_1+\bm{\delta}_2+\bm{\delta}_3}{3}\right)=&\ (\pi_1+\pi_2+\pi_3)\cdot\frac{\Psi(1/2)+\Psi(t_1)+\Psi(1-t_1)}{3}+\pi_4\Psi(1-t_2) &\ \\
    &\ \Longrightarrow \bm{\delta}_4\in\arg\max_{\bm{\pi}}\mathcal{P}_\Psi\left(\bm{\pi},\frac{\bm{\delta}_1+\bm{\delta}_2+\bm{\delta}_3}{3}\right)\,.
    \end{aligned}\]
    Therefore, $(\bm{\delta}_4,(\bm{\delta}_1+\bm{\delta}_2+\bm{\delta}_3)/3)$ is a Nash equilibrium, which causes a contradiction to the fact that the Nash solution supports on $S_1:=\{y_1,y_2,y_3\}$. Hence, we have $\Psi(t_1)+\Psi(1-t_1)+\Psi(1/2)>3\Psi(1-t_2)$ for any $t_1,t_2>1/2$. Let $t_2\to 1/2$, we have $\Psi(t)+\Psi(1-t)\geqslant 2\Psi(1/2)$ for any $t\in [0,1]$.    
    \begin{table}[htbp]
        \caption{Payoff matrix with four responses $\{y_1,y_2,y_3,y_4\}$.}
        \label{table:necessary-smith-geqslant}
        \centering
        \scalebox{1.0}{
        \begin{tabular}{c | c  c  c  c }
        \toprule
              $\Psi(\cP(y\succ y'))$ &  
              $y'=y_1$ & 
              $y'=y_2$ &
              $y'=y_3$ &
              $y'=y_4$ 
             \\ \midrule
             $y=y_1$ &
             $\Psi(1/2)$ & $\Psi(t_1)$ & $\Psi(1-t_1)$ & $\Psi(t_2)$ \\
             $y=y_2$ &
             $\Psi(1-t_1)$ & $\Psi(1/2)$ & $\Psi(t_1)$ & $\Psi(t_2)$\\
             $y=y_3$ &
             $\Psi(t_1)$ & $\Psi(1-t_1)$ & $\Psi(1/2)$ & $\Psi(t_2)$ \\
             $y=y_4$ &
             $\Psi(1-t_2)$ & $\Psi(1-t_2)$ & $\Psi(1-t_2)$ & $\Psi(1/2)$\\
             \bottomrule
        \end{tabular}}
    \end{table}
\end{proof}

\begin{lemma}\label{lem:smith-leqslant}
    When $\Psi(\cdot)$ is continuous at $1/2$. Achieving Smith consistency only if 
    \[\Psi(t)+\Psi(1-t)\leqslant 2\Psi(1/2)\,,\forall t\in [0,1]\,.\]
\end{lemma}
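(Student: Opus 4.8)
The plan is to argue the contrapositive: assuming $\Psi$ is continuous at $1/2$, I will show that if $\Psi(t^\star)+\Psi(1-t^\star)>2\Psi(1/2)$ for some $t^\star$, then problem \eqref{eq:gen-nlhf} fails to be Smith consistent. After the harmless shift that normalizes $\Psi(1/2)=0$ (which leaves the Nash solutions unchanged), the hypothesis reads $\Psi(t^\star)+\Psi(1-t^\star)>0$, and by the symmetry $t^\star\leftrightarrow 1-t^\star$ I may take $t^\star>1/2$. The key structural insight driving the construction is that, unlike in Lemma \ref{lem:smith-geqslant}, a single dominated response can never be forced into the maximizer's support: adding mass to any response that loses to all of a top cycle strictly lowers the payoff against the binding $S_1$-columns. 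A richer configuration is therefore required, and I will use a game built from \emph{two} disjoint $3$-cycles.

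Concretely, I take $n=6$, let $S_1=\{y_1,y_2,y_3\}$ and $S_2=\{y_4,y_5,y_6\}$ each be a Condorcet cycle with all internal pairwise preferences equal to $t^\star$, and let every response in $S_1$ beat every response in $S_2$ with a common preference $s>1/2$ chosen close to $1/2$. By Theorem \ref{thm:preference-decomposition}, this profile has Smith set exactly $S_1$, so Smith consistency (Definition \ref{def: Smith consistency}) would force every Nash solution to be supported on $S_1$. Writing $c:=\tfrac13\bigl(\Psi(t^\star)+\Psi(1-t^\star)\bigr)>0$ for the common within-cycle value and $w:=\Psi(s)$, $w':=\Psi(1-s)$ for the between-group payoffs, I observe that $w\geqslant 0>w'$ by the Condorcet condition of Theorem \ref{thm:condorcet-consistent} (which Smith consistency implies), and that continuity at $1/2$ gives $w\to 0$ and $w'\to 0$ as $s\to 1/2^+$; in particular $w<c$ once $s$ is close enough to $1/2$.

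I then restrict to the maximizer's rotation-symmetric strategies $\bm{\pi}(\alpha)$ placing mass $\alpha$ uniformly on $S_1$ and $1-\alpha$ uniformly on $S_2$. Since the payoff matrix is invariant under independent cyclic rotations of $S_1$ and of $S_2$, a standard symmetrization argument shows the value of the game is attained within this family, so it suffices to maximize the guaranteed payoff $V(\alpha):=\min_{\bm{\pi}'}\mathcal{P}_\Psi(\bm{\pi}(\alpha),\bm{\pi}')$. As the minimizer's best response is a pure column and all $S_1$-columns (resp.\ $S_2$-columns) yield equal payoff, a direct computation gives $V(\alpha)=\min\{\alpha c+(1-\alpha)w',\ \alpha w+(1-\alpha)c\}$. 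At $\alpha=1$ this is $\min\{c,w\}=w$, with the $S_2$-column binding; because the $S_2$-column term $\alpha w+(1-\alpha)c$ has slope $w-c<0$ in $\alpha$ while the $S_1$-column term stays above $w$ for $\alpha$ near $1$, decreasing $\alpha$ strictly raises $V$. Hence the optimum is attained at some $\alpha^\star<1$, so this symmetric optimal strategy is a Nash solution placing positive mass on $S_2$, contradicting Smith consistency. Combined with Lemma \ref{lem:smith-geqslant}, this yields the equality $\Psi(t)+\Psi(1-t)=2\Psi(1/2)$.

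I expect the main obstacle to be exactly the phenomenon that motivated the two-cycle design: in any ``top cycle plus a strictly lower group'' configuration, the lower group appears attractive only if it carries its own favorable internal cycle value $c$, so that a narrow domination margin ($s\approx 1/2$, hence $w\approx 0<c$) makes diverting mass into $S_2$'s rich cycle profitable. The delicate step is to confirm rigorously that the symmetric family actually realizes the true max-min value of the full $6\times 6$ game — this is where the $\mathbb{Z}_3\times\mathbb{Z}_3$ rotation-symmetrization argument must be invoked — and to verify that $V(1)<\max_\alpha V(\alpha)$ strictly, which rules out any $S_1$-supported optimal strategy and forces the violation.
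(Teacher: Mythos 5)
Your proposal is correct, and it is built on exactly the same counterexample as the paper: the $6$-response game consisting of two disjoint $3$-cycles, with every response of the top cycle $S_1$ beating every response of the bottom cycle $S_2$ by a common margin $s$ (the paper's $t_2$) pushed toward $1/2$, which is precisely where continuity of $\Psi$ at $1/2$ enters in both arguments; your condition $w<c$ is, after the normalization $\Psi(1/2)=0$, literally the paper's condition $\Psi(t_1)+\Psi(1/2)+\Psi(1-t_1)>3\Psi(t_2)$. Where you genuinely differ is in how the offending Nash solution is exhibited. The paper solves explicitly for equalizing weights $(\mu_1,\mu_2)$ and $(\mu_1^\prime,\mu_2^\prime)$ making each player indifferent across all pure strategies, and verifies by direct computation that the resulting pair is a Nash equilibrium whose first component puts mass on $S_2$. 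You instead invoke a $\mathbb{Z}_3\times\mathbb{Z}_3$ symmetrization --- the maximin set is nonempty, convex (since $\bm{\pi}\mapsto\min_{\bm{\pi}^\prime}\mathcal{P}_\Psi(\bm{\pi},\bm{\pi}^\prime)$ is concave), and closed under the game's rotation symmetries, hence contains an invariant point $\bm{\pi}(\alpha)$ --- and then run a one-parameter analysis showing $V(1)<V(1-\epsilon)$ for small $\epsilon>0$, so every optimal $\alpha^\star$ is strictly below $1$. Your route has the advantage of analyzing only the maximizer's side (no construction of the minimizer's equilibrium strategy is needed) and of making transparent why the construction works (the binding $S_2$-column has slope $w-c<0$ in $\alpha$), at the cost of justifying the symmetrization lemma, which you correctly flag as the delicate step and which is indeed standard and valid here. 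The paper's route is more computational but entirely self-contained. Both proofs are sound.
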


\begin{proof}[Proof of Lemma \ref{lem:smith-leqslant}]
    We consider the case where $n=6$ and the game with the payoff in Table \ref{table:necessary-smith-leqslant} for any $t_1,t_2>1/2$. 
    Notice that if $\Psi(t_1)+\Psi(1/2)+\Psi(1-t_1)>3\Psi(t_2)(\geqslant3\Psi(1/2)>3\Psi(1-t_2))$, there exists positive $\bm{\mu}=(\mu_1/3,\mu_1/3,\mu_1/3,\mu_2/3,\mu_2/3,\mu_2/3)$ and $\bm{\mu}^\prime=(\mu_1^\prime/3,\mu_1^\prime/3,\mu_1^\prime/3\mu_2^\prime/3,\mu_2^\prime/3,\mu_2^\prime/3)$ such that $\mu_1+\mu_2=\mu^\prime_1+\mu^\prime_2=1$, and
    \[\begin{aligned}
    &\mu_1\left[\Psi(1/2)+\Psi(t_1)+\Psi(1-t_1)-3\Psi(t_2)\right] =\mu_2\left[\Psi(1/2)+\Psi(t_1)+\Psi(1-t_1)-3\Psi(1-t_2)\right]\,,\\
    &\mu^\prime_1\left[\Psi(1/2)+\Psi(t_1)+\Psi(1-t_1)-3\Psi(1-t_2)\right] =\mu^\prime_2\left[\Psi(1/2)+\Psi(t_1)+\Psi(1-t_1)-3\Psi(t_2)\right]\,.
    \end{aligned}\]
    Hence, we have 
    \[\begin{aligned}
    &\ \mu_1(\Psi(1/2)+\Psi(t_1)+\Psi(1-t_1))+3\mu_2\Psi(1-t_2)\\
    =&\ \mu_2(\Psi(1/2)+\Psi(t_1)+\Psi(1-t_1))+3\mu_1\Psi(t_2):=3A\,,\\
    &\ \mu^\prime_1(\Psi(1/2)+\Psi(t_1)+\Psi(1-t_1))+3\mu^\prime_2\Psi(t_2)\\
    =&\ \mu^\prime_2(\Psi(1/2)+\Psi(t_1)+\Psi(1-t_1))+3\mu^\prime_1\Psi(1-t_2):=3B\,.
    \end{aligned}\]
    Thus, we have 
    \[\begin{aligned}
        \mathcal{P}_\Psi(\bm{\pi},\bm{\mu}^\prime)=&\ \left(\pi_1+\pi_2+\pi_3\right)\left[\frac{\mu_1^\prime}{3}\left(\Psi(1/2)+\Psi(t_1)+\Psi(1-t_1)\right)+\mu_2^\prime\Psi(t_2)\right]\\
        &\ \, + \left(\pi_4+\pi_5+\pi_6\right)\left[\mu_1^\prime\Psi(1-t_2)+\frac{\mu_2^\prime}{3}\left(\Psi(1/2)+\Psi(t_1)+\Psi(1-t_1)\right)\right]=B\,,\\
    \end{aligned}\]
    and 
    \[\begin{aligned}
        \mathcal{P}_\Psi(\bm{\mu},\bm{\pi})=&\ \left(\pi_1+\pi_2+\pi_3\right)\left[\frac{\mu_1}{3}\left(\Psi(1/2)+\Psi(t_1)+\Psi(1-t_1)\right)+\mu_2\Psi(1-t_2)\right]\\
        &\ \,  + \left(\pi_4+\pi_5+\pi_6\right)\left[\mu_1\Psi(t_2)+\frac{\mu_2}{3}\left(\Psi(1/2)+\Psi(t_1)+\Psi(1-t_1)\right)\right]=A\,.
    \end{aligned}\]
    Therefore, $\bm{\mu}\in\arg\max_{\bm{\pi}}\mathcal{P}_\Psi(\bm{\pi},\bm{\mu}^\prime)\,,\,\bm{\mu}^\prime\in\arg\min_{\bm{\pi}}\mathcal{P}_\Psi(\bm{\mu},\bm{\pi})$, which provides that $(\bm{\mu},\bm{\mu}^\prime)$ is a Nash equilibrium. However, this causes a contradiction to the fact that the Nash solution supports on $S_1:=\{y_1,y_2,y_3\}$. Thus, it must hold that $\Psi(t_1)+\Psi(1/2)+\Psi(1-t_1)\leqslant 3\Psi(t_2)$ for any $t_1,t_2>1/2$. Let $t_2\to 1/2$, we obtain $\Psi(t)+\Psi(1-t)\leqslant 2\Psi(1/2)$ for any $t\in [0,1]$.
    \begin{table}[htbp]
    \caption{Payoff matrix with six responses $\{y_1,y_2,y_3,y_4,y_5,y_6\}$.}
    \label{table:necessary-smith-leqslant}
    \centering
    \scalebox{1.0}{
    \begin{tabular}{c | c  c  c  c  c  c }
        \toprule
          $\Psi(\cP(y\succ y'))$ &  
          $y'=y_1$ & 
          $y'=y_2$ &
          $y'=y_3$ &
          $y'=y_4$ &
          $y'=y_5$ &
          $y'=y_6$
         \\ \midrule
         $y=y_1$ &
         $\Psi(1/2)$ & $\Psi(t_1)$ & $\Psi(1-t_1)$ & $\Psi(t_2)$ & $\Psi(t_2)$ & $\Psi(t_2)$ \\
         $y=y_2$ &
         $\Psi(1-t_1)$ & $\Psi(1/2)$ & $\Psi(t_1)$ & $\Psi(t_2)$ & $\Psi(t_2)$ & $\Psi(t_2)$ \\
         $y=y_3$ &
         $\Psi(t_1)$ & $\Psi(1-t_1)$ & $\Psi(1/2)$ & $\Psi(t_2)$ & $\Psi(t_2)$ & $\Psi(t_2)$ \\
         $y=y_4$ &
         $\Psi(1-t_2)$ & $\Psi(1-t_2)$ & $\Psi(1-t_2)$ & $\Psi(1/2)$ & $\Psi(t_1)$ & $\Psi(1-t_1)$\\
         $y=y_5$ &
         $\Psi(1-t_2)$ & $\Psi(1-t_2)$ & $\Psi(1-t_2)$ &$\Psi(1-t_1)$ & $\Psi(1/2)$ & $\Psi(t_1)$  \\
         $y=y_6$ & 
         $\Psi(1-t_2)$ & $\Psi(1-t_2)$ & $\Psi(1-t_2)$ & $\Psi(t_1)$ & $\Psi(1-t_1)$ & $\Psi(1/2)$ \\
         \bottomrule
    \end{tabular}
    }
\end{table}
\end{proof}

Finally, we prove that the condition \eqref{eq:necessary-smith} is also sufficient for Smith consistency. Suppose that $(\bm{\pi}_1^\star,\bm{\pi}_2^\star)$ is a Nash equilibrium, notice that
\begin{equation}\label{eq:factI-smith-sufficient}
    \begin{aligned}
        &\mathcal{P}_\Psi(\bm{\pi}_1^\star,\bm{\pi}_2^\star)=\max_{\bm{\pi}}\mathcal{P}_\Psi(\bm{\pi},\bm{\pi}_2^\star)\geqslant \mathcal{P}_\Psi(\bm{\pi}_2^\star,\bm{\pi}_2^\star)=\Psi(1/2)\,,\\
        &\mathcal{P}_\Psi(\bm{\pi}_1^\star,\bm{\pi}_2^\star)=\min_{\bm{\pi}}\mathcal{P}_\Psi(\bm{\pi}_1^\star,\bm{\pi})\leqslant \mathcal{P}_\Psi(\bm{\pi}_1^\star,\bm{\pi}_1^\star)=\Psi(1/2)\,,
    \end{aligned}
\end{equation}
which follows from the following fact: for any $\bm{\pi}$,
\[
\mathcal{P}_\Psi(\bm{\pi},\bm{\pi})=\sum_{i=1}^n\sum_{j=1}^n\pi_i\pi_j\Psi_{ij}=\frac{1}{2}\sum_{i=1}^n\sum_{j=1}^n\pi_i\pi_j\left(\Psi_{ij}+\Psi_{ji}\right)=\Psi(1/2)\sum_{i=1}^n\sum_{j=1}^n\pi_i\pi_j=\Psi(1/2)\,.
\]
Thus, from \eqref{eq:factI-smith-sufficient}, we have $\mathcal{P}_\Psi(\bm{\pi}_1^\star,\bm{\pi}_2^\star)=\Psi(1/2)$. Then we prove $\operatorname{supp}(\bm{\pi}_1^\star)\subseteq S_1$. Hence, the Nash solution is Smith consistent, i.e., only supports on $S_1$.
\paragraph{Case 1.} If $\operatorname{supp}(\bm{\pi}_1^\star)\bigcap S_1=\emptyset$, taking any $j\in S_1$, we have 
\[
\mathcal{P}_\Psi(\bm{\pi}_1^\star,\bm{\pi}_2^\star)=\min_{\bm{\pi}}\mathcal{P}_\Psi(\bm{\pi}_1^\star,\bm{\pi})\leqslant \mathcal{P}_\Psi(\bm{\pi}_1^\star,\bm{\delta}_j)=\sum_{i=1}^n\pi_{1,i}^\star\Psi_{ij}=\sum_{i\in S_1^c}\pi_{1,i}^\star\Psi_{ij}<\Psi(1/2)\,,
\]which causes a contradiction to the fact that $\mathcal{P}_\Psi(\bm{\pi}_1^\star,\bm{\pi}_2^\star)=\Psi(1/2)$.
\paragraph{Case 2.} If $\operatorname{supp}(\bm{\pi}_1^\star)\bigcap S_1\neq\emptyset$, and $\operatorname{supp}(\bm{\pi}_1^\star)\bigcap S_1^c\neq \emptyset$, taking $\widetilde{\bm{\pi}}_2^\star$ as: 
\[\widetilde{\pi}^\star_{2,j}=\mathds{1}\left\{j\in S_1\right\}\cdot \frac{\pi^\star_{1,j}}{\sum_{j\in S_1}\pi^\star_{1,j}}\,.\]
Then we have 
\begin{equation}\label{eq:factII-smith-sufficient}
    \begin{aligned}
        \mathcal{P}_\Psi(\bm{\pi}_1^\star,\bm{\pi}_2^\star)=&\ \min_{\bm{\pi}}\mathcal{P}_\Psi(\bm{\pi}_1^\star,\bm{\pi})\leqslant \mathcal{P}_\Psi(\bm{\pi}_1^\star,\widetilde{\bm{\pi}}^\star_2)\\
        = &\ \sum_{i\in S_1}\sum_{j\in S_1}\pi^\star_{1,i}\widetilde{\pi}^\star_{2,j}\Psi_{ij}+\sum_{i\in S_1^c}\sum_{j\in S_1}\pi^\star_{1,i}\widetilde{\pi}^\star_{2,j}\Psi_{ij}\\
        < &\ \frac{\sum_{i\in S_1}\sum_{j\in S_1}\pi^\star_{1,i}\pi^\star_{1,j}\Psi_{ij}}{\sum_{j\in S_1}\pi^\star_{1,j}} + \Psi(1/2)\sum_{i\in S_1^c}\sum_{j\in S_1}\pi^\star_{1,i}\widetilde{\pi}^\star_{2,j}\\
        = &\ \Psi(1/2)\sum_{i\in S_1}\pi_{1,i}^\star+\Psi(1/2)\sum_{i\in S_1^c}\pi^\star_{1,i}=\Psi(1/2)\,,
    \end{aligned}
\end{equation}
which follows from the following fact:
\[\begin{aligned}
    \sum_{i\in S_1}\sum_{j\in S_1}\pi_{1,i}^\star\pi_{1,j}^\star\Psi_{ij}=&\ \frac{1}{2}\sum_{i\in S_1}\sum_{j\in S_1}\pi_{1,i}^\star\pi_{1,j}^\star\left(\Psi_{ij}+\Psi_{ji}\right)\\
    =&\ \Psi(1/2)\sum_{i\in S_1}\sum_{j\in S_1}\pi_{1,i}^\star\pi_{1,j}^\star=\Psi(1/2)\left(\sum_{i\in S_1}\pi_{1,i}^\star\right)\left(\sum_{j\in S_1}\pi_{1,j}^\star\right)
\end{aligned}\]
However, \eqref{eq:factII-smith-sufficient} also causes a contradiction to the fact that $\mathcal{P}_\Psi(\bm{\pi}_1^\star,\bm{\pi}_2^\star)=\Psi(1/2)$.

Therefore, it must hold that $\operatorname{supp}(\bm{\pi}_1^\star)\bigcap S_1^c= \emptyset$, i.e., $\operatorname{supp}(\bm{\pi}_1^\star)\subseteq S_1$.

\subsection{Proof of Example \ref{example:continuous-Smith}}\label{sec:proof-continuous-Smith}

We prove this conclusion by contradiction. Suppose that the Nash solution is $\bm{\pi}_1^\star$ that satisfies $\operatorname{supp}(\bm{\pi}_1^\star)\cap S_1^c\neq\emptyset$, and the Nash equilibrium is $(\bm{\pi}_1^\star,\bm{\pi}_2^\star)$. 

\paragraph{Case 1.} If $\operatorname{supp}(\bm{\pi}_1^\star)\cap S_1=\emptyset$, taking $j^\prime\in S_1$, we have 
\[
\mathcal{P}_\Psi(\bm{\pi}_1^\star,\bm{\pi}_2^\star)=\min_{\bm{\pi}}\mathcal{P}_\Psi(\bm{\pi}_1^\star,\bm{\pi})\leqslant \mathcal{P}_\Psi(\bm{\pi}_1^\star,\bm{\delta}_{j^\prime})=\sum_{i\in S_1^c}\pi^\star_{1,i}\Psi_{ij^\prime}=M_-\,.
\]
However, we have 
\[
\mathcal{P}_\Psi(\bm{\pi}_1^\star,\bm{\pi}_2^\star)=\max_{\bm{\pi}}\mathcal{P}_\Psi(\bm{\pi},\bm{\pi}_2^\star)\geqslant \mathcal{P}_\Psi(\operatorname{Unif}(S_1),\bm{\pi}_2^\star)=\sum_{i\in S_1}\sum_{j=1}^n \frac{\pi^\star_{2,j}}{\vert S_1\vert}\Psi_{ij}>M_-\,,
\]
which causes a contradiction. 

\paragraph{Case 2.} If $\operatorname{supp}(\bm{\pi}_2^\star)\cap S_1=\emptyset$ and $\operatorname{supp}(\bm{\pi}_1^\star)\cap S_1\neq\emptyset$, taking $i^\prime\in\operatorname{supp}(\bm{\pi}_1^\star)\cap S_1$, we have 
\[
\mathcal{P}_\Psi(\bm{\pi}_1^\star,\bm{\pi}_2^\star)=\max_{\bm{\pi}}\mathcal{P}_\Psi(\bm{\pi},\bm{\pi}_2^\star)\geqslant \mathcal{P}_\Psi(\bm{\delta}_{i^\prime},\bm{\pi}_2^\star)=\sum_{j\in S_1^c}\pi_{2,j}^\star\Psi_{i^\prime j}=M_+\,.
\]
However, we have 
\[
\mathcal{P}_\Psi(\bm{\pi}_1^\star,\bm{\pi}_2^\star)=\min_{\bm{\pi}}\mathcal{P}_\Psi(\bm{\pi}_1^\star,\bm{\pi})\leqslant \mathcal{P}_\Psi(\bm{\pi}_1^\star,\operatorname{Unif}(S_1))=\sum_{i=1}^n\sum_{j\in S_1}\frac{\pi^\star_{1,i}}{\vert S_1\vert}\Psi_{ij}<M_+\,,
\]
which cause a contradiction.

\paragraph{Case 3.} If If $\operatorname{supp}(\bm{\pi}_2^\star)\cap S_1\neq\emptyset$ and $\operatorname{supp}(\bm{\pi}_1^\star)\cap S_1\neq\emptyset$, taking $i_2^\star\in \operatorname{supp}(\bm{\pi}_2^\star)\cap S_1$, we consider the following strategy $\bm{\pi}^\prime_1$:
\[
\begin{cases}
    \pi^\prime_{1,i}=0\,, & i\in S_1^c\\
    \pi^\prime_{1,i}=\pi^\star_{1,i}\,, & i\in S_1\backslash \{i_2^\star\}\\
    \pi^\prime_{1,i_2^\star}=\pi^\star_{1,i_2^\star}+\sum_{i\in S_1^c}\pi^\star_{1,i}\,, & i=i_2^\star
\end{cases}\,.
\]
Then we have 
\begin{equation}\label{eq:important-ineq-example-smith-continuous}
    \begin{aligned}
    \mathcal{P}_\Psi(\bm{\pi}^\prime_1,\bm{\pi}_2^\star)-\mathcal{P}_\Psi(\bm{\pi}_1^\star,\bm{\pi}_2^\star)=&\ \sum_{i=1}^n\sum_{j=1}^n \left(\pi^\prime_{1,i}-\pi^\star_{1,i}\right)\pi^\star_{2,j}\Psi_{ij}\\
    = &\ -\sum_{i\in S_1^c}\sum_{j=1}^n \pi^\star_{1,i}\pi^\star_{2,j}\Psi_{ij} + \sum_{j=1}^n\sum_{i\in S_1^c}\pi_{1,i}^\star\pi^\star_{2,j}\Psi_{i_2^\star j}\\
    = &\ \sum_{j=1}^n \pi^\star_{2,j}\left[\sum_{i\in S_1^c}\pi_{1,i}^\star\left(\Psi_{i_2^\star j}-\Psi_{ij}\right)\right]>0\,.
\end{aligned}
\end{equation}
where the last inequality follows from the following two facts: for any $i\in S_1^c$, 
\[\Psi_{i_2^\star j}-\Psi_{ij}=\begin{cases}
    M_+-\Psi_{ij}\geqslant 0\,, & j\in S_1^c\\
    \Psi_{i_2^\star j}-M_-\geqslant 0\,, & j\in S_1
\end{cases}\,,\]
and when $j=i_2^\star$, 
\[
\pi^\star_{2,i_2^\star}\left[\sum_{i\in S_1^c}\pi_{1,i}^\star\left(\Psi_{i_2^\star i_2^\star}-\Psi_{i i_2^\star}\right)\right]=\pi^\star_{2,i_2^\star}\left(\Psi(1/2)-M_-\right)\sum_{i\in S_1^c}\pi^\star_{1,i}>0\,.
\]
However, \eqref{eq:important-ineq-example-smith-continuous} causes a contradiction to the fact that $\mathcal{P}_\Psi(\bm{\pi}_1^\prime,\bm{\pi}_2^\star)\leqslant \max_{\bm{\pi}}\mathcal{P}_\Psi(\bm{\pi},\bm{\pi}^\star_2)=\mathcal{P}_\Psi(\bm{\pi}_1^\star,\bm{\pi}_2^\star)$. 

Hence, in summary, it must hold that $\operatorname{supp}(\bm{\pi}_1^\star)\cap S_1^c=\emptyset$, i.e., $\operatorname{supp}(\bm{\pi}_1^\star)\subseteq S_1$.
\section{Impossibility of Preference Matching}
\label{sec:bias}

In this section, we first revisit the definition of preference matching in the BTL model \citep{xiao2024algorithmic}. Then we introduce a general theoretical framework of preference matching within the context of game-theoretic LLM alignment, and establish a general impossibility result, as stated in Theorem \ref{thm: no pm}. Finally we apply this general result to problem \eqref{eq:gen-nlhf}, concluding that preference matching is impossible. The proof of Theorem \ref{thm: no pm} is provided in Section \ref{sec:proof-no-pm}.

In previous sections, we have characterized the diversity of alignment result via mixed strategies. In \citet{xiao2024algorithmic}, the authors proposed a more refined criterion for diversity when the preference $\cP(y \succ  y' \mid  x)$ follows a BTL model, 
\begin{equation*}
    \cP(y \succ y'\mid x)=\frac{\exp(r(x,y))}{\exp(r(x,y))+\exp(r(x,y'))}\,,
\end{equation*}
as given by \eqref{eq:BT}. 
They pointed out that it is unwise to completely disregard any minority opinions in the case that 51\% of human labelers prefer $y_1$ over $y_2$ for a binary comparison. 
They suggested that the policy \eqref{eq: PM rlhf def},
\begin{equation}
\label{eq: PM rlhf def}
    \bm{\pi}^*(y\mid  x)=\frac{\exp{(r(x,y))}}{\sum_{y^\prime}\exp{(r(x,y^\prime))}}\,,
\end{equation}
referred to as the preference-matching policy, fully accounts for the diversity in human preferences.

It is easy to see that there exists a Condorcet winning response under BTL model. According to Theorem \ref{thm:condorcet-consistent}, using  preference $\Psi(\cP(y \succ y' \mid  x))$ as payoff with $\Psi(t)=t$ or $\Psi(t)=\log(t/(1-t))$ will lead the Nash solution to collapse to a single response instead of matching with $\bm{\pi}^*$. This shows that both \texttt{RLHF} and \texttt{NLHF} do not accounts for the diversity inherent in human preferences from the perspective of preference matching \citep{xiao2024algorithmic,liu2025statistical}, even under BTL model.

To achieve alignment fully accounting for diversity, we would like to match the Nash solution with the desired policy $\bm{\pi}^*$. 
In \citet{xiao2024algorithmic}, the authors answered this question for \texttt{RLHF}. They proposed the preference matching RLHF (\texttt{PM-RLHF}) method which successfully achieves preference matching, by slightly modifying the \texttt{RLHF} objective.
Here, we aim to explore the possibility of designing a new learnable payoff matrix that aligns with the desired strategy in a game-theoretic framework for LLM alignment:
\begin{center}
\emph{ Which choices of $\Psi$ ensure preference matching?}
\end{center}

Although it is currently unknown how to generalize the notion of preference matching policy to a general non-BTL preference, to maintain the generality of the discussion and drop the BTL model assumption, we suppose there exists an ideal policy, denoted by $\bm{\pi}^*$, which captures the diversity of human preferences perfectly.

Given a prompt $x$, we consider the set of all possible responses generated by the LLM: $\{y_1, \ldots, y_n\}$. We further suppose that the policy $\bm{\pi}^*$ has full support over these $n$ responses, meaning $\bm{\pi}^* > 0$, as we exclude responses not supported by $\bm{\pi}^*$ from consideration. Then our goal is to construct a game, represented by a payoff matrix $\{\alpha_{ij}\}_{i,j=1}^n$, with its Nash solution the given policy $\bm{\pi}^*$, i.e.,
\begin{equation*}
    \bm{\pi}^*=\arg\max_{\bm{\pi}} \min_{\bm{\pi}^{\prime}} \sum_{i=1}^n \sum_{j=1}^n \alpha_{ij}\pi_i\pi'_j\,.
\end{equation*}

To answer this question, we characterize the Nash solution under the given payoff matrix $\{\alpha_{ij}\}_{i,j=1}^n$, which is summarized by Lemma \ref{lem: kkt}. The proof is deferred to Appendix \ref{pf: kkt}.

\begin{lemma}[KKT Condition]\label{lem: kkt}
    Consider a game with payoff matrix $\{\alpha_{ij}\}_{i,j=1}^n$. Then $\bm{\pi}^*>0$ is a Nash solution to the game if and only if there exists $\bm{u}^*\in\mathbb{R}^n$ with $\bm{u}^*\geqslant 0$ and $\sum_{i=1}^n u_i^*=1$, and $t^*\in\mathbb{R}$ such that the following KKT conditions hold: \begin{equation*}
    \begin{cases}
        \sum_{i=1}^n \pi_i^*\alpha_{ij}-t^*\leqslant 0 & j=1,\cdots,n \\
        u_j^*\left(\sum_{i=1}^n\pi_i^*\alpha_{ij}-t^*\right)=0&j=1,\cdots,n\\
        \sum_{j=1}^n\alpha_{ij}u_j^*=t^*&i=1,\cdots,n
    \end{cases}\,.
\end{equation*}
\end{lemma}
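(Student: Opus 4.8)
The plan is to characterize the Nash solution of a two-player zero-sum matrix game through standard min-max duality and the first-order optimality conditions. Since $\bm{\pi}^*>0$ lies in the interior of the probability simplex relative to its support (indeed on all of $[n]$), I would set up the max-min problem $\max_{\bm{\pi}}\min_{\bm{\pi}'}\sum_{i,j}\alpha_{ij}\pi_i\pi_j'$ and observe that for fixed $\bm{\pi}$, the inner minimization over the simplex is a linear program whose optimal value equals $\min_j\sum_i\alpha_{ij}\pi_i$. Thus the outer problem becomes $\max_{\bm{\pi}}\min_j\sum_i\alpha_{ij}\pi_i$ subject to $\bm{\pi}\geqslant 0$, $\sum_i\pi_i=1$. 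Introducing an auxiliary variable $t$ to represent the inner minimum, this is equivalent to the linear program $\max_{\bm{\pi},t} t$ subject to $\sum_i\alpha_{ij}\pi_i\geqslant t$ for all $j$, together with the simplex constraints on $\bm{\pi}$.

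Next I would write the KKT/Lagrangian optimality conditions for this linear program. The dual variables $u_j^*\geqslant 0$ attached to the constraints $\sum_i\alpha_{ij}\pi_i\geqslant t$ form the opposing player's equilibrium strategy; the constraint $\sum_j u_j^*=1$ arises from stationarity in $t$, and complementary slackness gives $u_j^*(\sum_i\pi_i^*\alpha_{ij}-t^*)=0$. Stationarity in $\pi_i$ (using the simplex multiplier) yields $\sum_j\alpha_{ij}u_j^*=t^*$ for every $i$; here the sign of the inequality is pinned to equality precisely because $\pi_i^*>0$ for all $i$, so every primal variable is interior and its stationarity condition holds with equality rather than as an inequality. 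This is where the full-support hypothesis $\bm{\pi}^*>0$ is essential: it removes the inequality slack that would otherwise appear in the stationarity conditions, giving the clean third equation in the statement.

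For the forward direction (Nash solution implies KKT), I would invoke the minimax theorem to guarantee existence of an equilibrium opposing strategy $\bm{u}^*$ and a common game value $t^*=\mathcal{P}_\Psi(\bm{\pi}^*,\bm{u}^*)$, then read off the three conditions as the primal feasibility, complementary slackness, and (interior) stationarity of the LP pair. For the converse (KKT implies Nash solution), I would verify directly that any $(\bm{\pi}^*,\bm{u}^*,t^*)$ satisfying the three conditions witnesses $\bm{\pi}^*$ as optimal: the first condition shows $\min_j\sum_i\pi_i^*\alpha_{ij}$ is bounded, the third shows $\bm{u}^*$ caps the value from above via $\max_i\sum_j\alpha_{ij}u_j^*=t^*$, and together with complementary slackness these sandwich the game value at $t^*$, certifying optimality by weak duality.

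The main obstacle is bookkeeping rather than conceptual difficulty: one must correctly translate between the bilinear game formulation and the LP, track all the Lagrange multipliers (the simplex equality multiplier for $\bm{\pi}$, the nonnegativity multipliers, and the variable $t^*$), and justify that the stationarity condition becomes an equality for \emph{all} $i$ because of the interiority assumption $\bm{\pi}^*>0$. Since the payoff is bilinear and the feasible region is a polytope, strong duality and the minimax theorem apply without regularity concerns, so no delicate constraint qualification argument is needed; the work lies in carefully assembling the complementary slackness relations and confirming the normalization $\sum_i u_i^*=1$ emerges from the stationarity in the auxiliary variable $t$.
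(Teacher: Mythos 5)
Your approach is the same as the paper's: reduce the inner minimization to pure strategies, pass to an epigraph-form linear program in $(\bm{\pi},t)$, and extract the KKT/complementary-slackness conditions, using the full-support hypothesis $\bm{\pi}^*>0$ to turn the stationarity inequalities into the equalities $\sum_j\alpha_{ij}u_j^*=t^*$. As a piece of mathematics your sketch is sound, and the individual ingredients (LP duality removing any constraint-qualification worry, the multiplier of $t$ forcing $\sum_j u_j^*=1$) are correctly identified.

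However, there is a real discrepancy that you never reconcile. Your LP is $\max_{\bm{\pi},t}\, t$ subject to $\sum_i\alpha_{ij}\pi_i\geqslant t$, so the primal-feasibility condition you would actually ``read off'' is $\sum_i\pi_i^*\alpha_{ij}-t^*\geqslant 0$, which is the \emph{reverse} of the first condition in the statement. Your converse argument also only works with $\geqslant$: that condition must supply the lower bound $\min_j\sum_i\pi_i^*\alpha_{ij}\geqslant t^*$ on the value guaranteed by $\bm{\pi}^*$, whereas the printed $\leqslant$ gives an upper bound, which certifies nothing. The resolution is that your version is the correct one and the statement as printed is false. For instance, with $n=2$ and $\alpha_{11}=\alpha_{21}=0$, $\alpha_{12}=2$, $\alpha_{22}=-1$, the policy $\bm{\pi}^*=(1/2,1/2)>0$ is a Nash solution (it guarantees the game value $0$), but the third condition forces $u_2^*=0$, hence $\bm{u}^*=(1,0)$ and $t^*=0$, and then $\sum_i\pi_i^*\alpha_{i2}=1/2>t^*$ violates the printed first condition, while the conditions with $\geqslant$ hold. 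The sign error originates in the paper's own proof: problem $(P)$ silently replaces $\max_{\bm{\pi}}\min_j$ by $\min_{\bm{\pi}}\max_j$, so the printed system characterizes optimality of $\bm{\pi}^*$ for a \emph{minimizing} first player, not for the Nash solution defined via $\arg\max\min$. (Nothing downstream breaks: Lemma \ref{lem: unique nash} forces $u_j^*>0$ for every $j$, hence equality $\sum_i\pi_i^*\alpha_{ij}=t^*$ for all $j$, where the direction of the inequality is immaterial, and the two explicit constructions also satisfy equality.) So carry out your plan, but state and prove the lemma with the first condition as $\sum_i\pi_i^*\alpha_{ij}-t^*\geqslant 0$, and flag the sign discrepancy explicitly rather than asserting that your LP yields the printed conditions.
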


According to Lemma \ref{lem: kkt}, it is easy to verify that the payoff matrix
\begin{equation}\label{eq:explicit construction I}
    \alpha_{ij}=\pi_i^*+\pi_j^*-\delta_{ij}\,,\forall\, 1\leqslant i,j\leqslant n\,,
\end{equation}
and the payoff matrix 
\begin{equation}\label{eq:explicit construction II}
    \alpha_{ij}=-\frac{\pi_j^*}{\pi_i^*}+n\delta_{ij}\,,\forall\, 1\leqslant i,j\leqslant n\,,
\end{equation}
both guarantee that $ \bm{\pi}^*$ is a Nash solution (the details are provided in Appendix \ref{app:verifying}). 
However, these payoff matrices do not depend on the given policy $\bm{\pi}^*$ in a reasonable way. The payoff matrix in Equation \eqref{eq:explicit construction I} is symmetric, making it difficult to interpret. Even worse, it depends on the raw value of \( \bm{\pi}^* \). In practice, \( \bm{\pi}^* \) is often only known up to a normalizing constant. For instance, the preference matching policy  \eqref{eq: PM rlhf def} includes a normalizing constant in the denominator that involves summing over \( n \) terms. This constant is hard to determine when \( n \) is large and unknown, as is often the case in LLMs. The payoff matrix in Equation \eqref{eq:explicit construction II} faces a similar issue as it explicitly depends on \( n \), which is an extremely large and unknown value in practice.

In summary, the above two payoff matrices rely on information that is often unavailable in practice, such as $n$ and the raw value of $\bm{\pi}^*$. 
What we can obtain in practice for the design of $\alpha_{ij}$ is the preference information between two responses $y_i$ and $y_j$, which we assume depends solely on the ratio between $\pi_i^*$ and $\pi_j^*$. 
When the preference satisfies the BTL model \eqref{eq:BT}, this assumption is justified by the fact that the preference between any two responses depends solely on the ratio of the values assigned by their corresponding preference matching policies \eqref{eq: PM rlhf def}.
From this practical consideration, we assume that the payoff matrix satisfies the following assumptions:
\begin{assumption}\label{ass:payoff}
    Given any $\bm{\pi}^*>0$, the payoff matrix \( \{\alpha_{ij}\}_{i,j=1}^n \) satisfies the following conditions:
    \begin{enumerate}
        \item For all \( i \in [n] \), \( \alpha_{ii} = C \) where \( C \) is a constant independent of \( \bm{\pi}^* \) and \( n \). In other words, the diagonal elements are the same constant.
        \item For all \( i,j \in [n] \) with $i\neq j$, \( \alpha_{ij} = f\left(\frac{\pi^*_i}{\pi^*_j}\right) \) for some smooth function \( f \) that is independent of \( \bm{\pi}^* \) and \( n \). In other words, the off-diagonal elements depend on the ratio \( \frac{\pi^*_i}{\pi^*_j} \) in the same way for all pairs \( (i, j) \) with $i\neq j$.
    \end{enumerate}
\end{assumption}

We emphasize that the above two assumptions are crucial for constructing a meaningful and practically learnable payoff matrix. Furthermore, for effective alignment, the payoff matrix should not only ensure \( \bm{\pi}^* \) to be a Nash solution, but \( \bm{\pi}^* \) must be the only Nash solution. The uniqueness requirement excludes trivial payoff matrices such as \( \alpha_{ij} = C \), where every \( \bm{\pi}^* > 0 \) is a Nash solution.

Unfortunately, in Theorem \ref{thm: no pm}, we prove that such a payoff matrix \( \{\alpha_{ij}\}_{i,j=1}^n \) does not exist generally. The proof can be found in Section \ref{sec:proof-no-pm}.

\begin{theorem}[Impossibility of Preference Matching for General Payoffs]
\label{thm: no pm}
    There does not exist a payoff matrix $\{\alpha_{ij}\}_{i,j=1}^n$ satisfying Assumption \ref{ass:payoff} such that for any given $\bm{\pi}^*>0$, the Nash solution to the game is unique and equals to $\bm{\pi}^*$. 
\end{theorem}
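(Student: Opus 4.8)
The plan is to combine Assumption \ref{ass:payoff} with the KKT characterization from Lemma \ref{lem: kkt}, and derive a contradiction by examining what the KKT conditions force as we vary $\bm{\pi}^*$ over all positive policies and $n$ over all sizes. Write $\alpha_{ij}=f(\pi_i^*/\pi_j^*)$ for $i\neq j$ and $\alpha_{ii}=C$. Since $\bm{\pi}^*$ must be the \emph{unique} Nash solution, the third KKT stationarity condition $\sum_{j=1}^n \alpha_{ij}u_j^*=t^*$ must hold; I would first argue that uniqueness essentially forces the dual variable $\bm{u}^*$ to coincide with $\bm{\pi}^*$ itself (a symmetric/interior saddle point of a square payoff where $\bm{\pi}^*>0$ pins down the opponent's equilibrium on the same support). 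This turns the stationarity condition into the system
\begin{equation*}
C\,\pi_i^*+\sum_{j\neq i} f\!\left(\frac{\pi_i^*}{\pi_j^*}\right)\pi_j^*=t^*\,,\qquad i=1,\dots,n\,,
\end{equation*}
which must hold for \emph{every} positive $\bm{\pi}^*$ and every $n$, with $t^*$ allowed to depend on $\bm{\pi}^*$ but $f,C$ fixed.

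The heart of the argument is to show this functional system is overdetermined. First I would exploit that the left-hand side must be equal across all $i$: subtracting the equations for indices $i$ and $k$ gives, for all positive $\bm{\pi}^*$,
\begin{equation*}
C(\pi_i^*-\pi_k^*)+\sum_{j\neq i,k}\left[f\!\left(\tfrac{\pi_i^*}{\pi_j^*}\right)-f\!\left(\tfrac{\pi_k^*}{\pi_j^*}\right)\right]\pi_j^*+f\!\left(\tfrac{\pi_i^*}{\pi_k^*}\right)\pi_k^*-f\!\left(\tfrac{\pi_k^*}{\pi_i^*}\right)\pi_i^*=0\,.
\end{equation*}
The strategy is to probe this identity with carefully chosen configurations: take most coordinates equal (so the summed terms collapse to a multiple of $f(r)-f(1/r)$ type expressions) and let two coordinates vary freely, then differentiate in the free ratios. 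Because $f$ is smooth and independent of $n$, I can increase $n$ to force a rigidity: the number of summands scales with $n$ while $f$ and $C$ stay fixed, so the only way the identity can persist as $n\to\infty$ is for $f$ to take a very specific form (essentially linear in its argument, i.e. $f(r)=a r+b$). Plugging such a candidate back into the stationarity system and re-imposing that it hold for all $\bm{\pi}^*$ should then clash with either the normalization $\sum_i u_i^*=1$, the constancy of $C$, or the uniqueness requirement—yielding the contradiction.

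I expect the main obstacle to be the uniqueness step, namely rigorously reducing the problem to the clean square system above. Lemma \ref{lem: kkt} characterizes when $\bm{\pi}^*$ \emph{is} a Nash solution, but the theorem demands it be the \emph{only} one; I must rule out the possibility that a cleverly chosen $f$ makes $\bm{\pi}^*$ a Nash solution while tolerating a nonsymmetric dual $\bm{u}^*\neq\bm{\pi}^*$ that still somehow forces uniqueness. The plan here is to use the interior, full-support structure: if $\bm{\pi}^*>0$ then complementary slackness makes all $n$ primal inequality constraints active, so $\sum_i \pi_i^*\alpha_{ij}=t^*$ for every $j$, and a parallel argument on the opponent's side gives $\sum_j \alpha_{ij} u_j^*=t^*$ for every $i$; uniqueness of the value together with the response structure then pins $\bm{u}^*$ down. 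A secondary subtlety is handling the freedom in $t^*$ (which may vary with $\bm{\pi}^*$): I would eliminate it by taking differences across rows as above, so that $t^*$ drops out entirely and only the fixed data $(f,C)$ remain constrained. Once $t^*$ is removed and $n$ is taken large, the contradiction should follow from a dimension-counting or differentiation argument against the fixed smooth $f$.
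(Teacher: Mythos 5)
Your overall strategy—reduce, via the KKT conditions of Lemma \ref{lem: kkt}, to a square linear system that must hold for every $\bm{\pi}^*>0$, then kill the fixed smooth $f$ by a variation/differentiation argument and the $n$-independence requirement—is the paper's strategy. But the reduction step, which you yourself flag as the main obstacle, contains two genuine errors, and the second is exactly where the paper has to do real work. First, your claim that uniqueness forces the dual $\bm{u}^*$ to coincide with $\bm{\pi}^*$ is false: the game is not symmetric, and the paper's own construction \eqref{eq:explicit construction II}, $\alpha_{ij}=-\pi_j^*/\pi_i^*+n\delta_{ij}$, has $\bm{\pi}^*$ as its unique Nash solution while its dual is $u_j^*\propto(\pi_j^*)^{-1}\neq\pi_j^*$ (Appendix \ref{app:verifying}). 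Hence your ``clean square system'' $C\pi_i^*+\sum_{j\neq i}f\left(\pi_i^*/\pi_j^*\right)\pi_j^*=t^*$ (row sums against $\bm{\pi}^*$) is simply not a consequence of the KKT conditions—nothing forces the opponent to play $\bm{\pi}^*$. Second, your fallback justification, ``if $\bm{\pi}^*>0$ then complementary slackness makes all $n$ primal inequality constraints active, so $\sum_i\pi_i^*\alpha_{ij}=t^*$ for every $j$,'' conflates the two slackness conditions: positivity of $\bm{\pi}^*$ only kills the multipliers $\tilde{u}_i^*$ of the sign constraints, while activity of the $j$-th inequality is tied to $u_j^*>0$, which does \emph{not} follow from $\bm{\pi}^*>0$. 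Proving $u_j^*>0$ for all $j$ is precisely where the uniqueness hypothesis must be spent, and it is nontrivial: the paper's Lemma \ref{lem: unique nash} does this by showing that if some $u_j^*=0$, a dimension count on the affine set cut out by the active equalities produces a second, linearly independent Nash solution, contradicting uniqueness. Your proposal contains no substitute for this argument, so the functional-equation stage has no valid starting point.

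For completeness: once the correct system $\sum_i\pi_i^*\alpha_{ij}=t^*$ for all $j$ is in hand, your endgame essentially matches the paper's—perturb two coordinates in opposite directions ($\pi_{k_1}\to\pi_{k_1}+\delta$, $\pi_{k_2}\to\pi_{k_2}-\delta$), subtract the equations for two columns so that $t^*$ and the common terms cancel, send $\delta\to0$ using smoothness, deduce the ODE $f(x)+xf'(x)=C_1$, hence $f(x)=C_2/x+C_3$, and substitute back to force $C_3=C+(n-1)C_2$, contradicting the $n$-independence in Assumption \ref{ass:payoff}. Two corrections to your sketch, though: the rigid form for this (column) system is hyperbolic, $f(x)=C_2/x+C_3$, not linear (linear $f(r)=ar+b$ is what your unjustified row system would yield); and no limit $n\to\infty$ is needed during the derivation—the argument runs at a single fixed $n\geqslant 5$, with $n$ entering only through the final contradiction.
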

\begin{remark}
    If we relax Assumption \ref{ass:payoff} and allow the entries of the payoff matrix to depend on $n$, then the design \eqref{eq:explicit construction II} is actually eligible for preference matching.
\end{remark}

Theorem \ref{thm: no pm} implies that no simple mapping of the preference can yield a payoff that leads to preference matching.
As a special case of Theorem \ref{thm: no pm}, under BTL model, the generalized game in Equation \eqref{eq:gen-nlhf} with a smooth mapping $\Psi$,
\begin{equation*}
\max_{\bm{\pi}} \min_{\bm{\pi}^{\prime}} \mathbb{E}_{x\sim\rho} \left[\mathbb{E}_{y\sim \bm{\pi}(\cdot\mid x)}\mathbb{E}_{y'\sim \bm{\pi}^\prime(\cdot\mid x)}\left[\Psi\left(\cP(y\succ y^{\prime}\mid x)\right)\right]\right]\,,
\end{equation*}
cannot achieve preference matching. Therefore, we obtain the following corollary.

\begin{corollary}
    Problem \eqref{eq:gen-nlhf} with smooth mapping $\Psi$ cannot achieve preference matching. 
\end{corollary}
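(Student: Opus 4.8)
The plan is to argue by contradiction: suppose a constant $C$ and a smooth function $f$ as in Assumption \ref{ass:payoff} exist such that, for \emph{every} $n$ and every $\bm{\pi}^*>0$, the game with payoff $\alpha_{ii}=C$ and $\alpha_{ij}=f(\pi_i^*/\pi_j^*)$ (for $i\neq j$) has $\bm{\pi}^*$ as its \emph{unique} Nash solution. The idea is to feed small instances ($n=2$ and $n=3$) into the KKT characterization of Lemma \ref{lem: kkt} to extract functional equations on $f$. Because the \emph{same} $C$ and $f$ must serve all $n$, these equations over-determine $f$ and collapse it to a constant; the payoff matrix then becomes identically $C$, for which every policy is a Nash solution, contradicting uniqueness. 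Throughout I write $F_j:=\sum_{i=1}^n\pi_i^*\alpha_{ij}$ for the total payoff the first player secures with $\bm{\pi}^*$ against the pure column $j$, and I denote by $A(\bm{\pi}^*)$ the resulting payoff matrix.

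First I would treat $n=2$, with $\bm{\pi}^*=(p,1-p)$, $0<p<1$. An interior optimum of a $2\times 2$ zero-sum game forces both players to mix fully, and the first player's optimal strategy is precisely the one making the second player indifferent between its two columns, i.e.\ $F_1=F_2$. Writing $r:=p/(1-p)$ and clearing the common positive factor, this simplifies to
\[
r\,f(r)-f(1/r)=C\,(r-1),\qquad \forall\, r>0,
\]
which I will call $(\star)$. Substituting $r\mapsto 1/r$ only reproduces $(\star)$, so $n=2$ alone leaves $f$ largely undetermined; the decisive constraint comes from $n=3$.

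Next I would treat $n=3$ with $\bm{\pi}^*=(\pi_1,\pi_2,\pi_3)>0$. Granting that the game is completely mixed (addressed below), the second player is indifferent across all three columns, so $F_1=F_2=F_3$. Expanding $F_1-F_2$ and isolating the terms involving only the pair $\{1,2\}$, that block is exactly $\bigl[\pi_1 C+\pi_2 f(\pi_2/\pi_1)\bigr]-\bigl[\pi_1 f(\pi_1/\pi_2)+\pi_2 C\bigr]$, which vanishes by $(\star)$ applied with $r=\pi_1/\pi_2$. Hence $F_1=F_2$ collapses to $\pi_3\bigl[f(\pi_3/\pi_1)-f(\pi_3/\pi_2)\bigr]=0$, and since $\pi_3>0$,
\[
f(\pi_3/\pi_1)=f(\pi_3/\pi_2).
\]
Taking $\pi_3=1$, $\pi_1=1/a$, $\pi_2=1/b$ shows $f(a)=f(b)$ for all $a,b>0$, so $f$ is constant (on a dense set, then everywhere by smoothness). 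Feeding $f\equiv\kappa$ into $(\star)$ gives $\kappa(r-1)=C(r-1)$, whence $\kappa=C$; thus $\alpha_{ij}\equiv C$, the payoff is identically $C$, and every policy is a Nash solution, contradicting uniqueness. This is exactly consistent with the remark following the theorem: construction \eqref{eq:explicit construction II} escapes the contradiction only because its diagonal $C=n-1$ varies with $n$, so no single universal $(\star)$ can hold.

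The main obstacle is justifying the completely-mixed assumption at $n=3$, i.e.\ that $F_1=F_2=F_3$ (equivalently, that the second player's optimal $\bm{u}^*$ has full support). For $n=2$ this is automatic. For $n=3$ I would argue it holds generically: since $\bm{\pi}^*>0$ has full support, Lemma \ref{lem: kkt} already forces $\sum_j \alpha_{ij}u_j^*=t^*$ for \emph{every} row $i$, i.e.\ $A(\bm{\pi}^*)\bm{u}^*=t^*\mathbf 1$; if $\bm{u}^*$ omitted a column, this square indifference system would be over-determined, forcing a degeneracy (vanishing of a suitable determinant) of the ratio-structured matrix $A(\bm{\pi}^*)$, which cuts out only a measure-zero set of $\bm{\pi}^*$. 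Thus complete mixing holds on a dense open set, the identity $f(\pi_3/\pi_1)=f(\pi_3/\pi_2)$ holds there, and continuity of $f$ propagates it everywhere. The corollary for problem \eqref{eq:gen-nlhf} under the BTL model then follows immediately, since there $\alpha_{ij}=\Psi\!\bigl(\mathcal{P}(y_i\succ y_j)\bigr)=\Psi\!\bigl(\pi_i^*/(\pi_i^*+\pi_j^*)\bigr)$ is a smooth function of the ratio $\pi_i^*/\pi_j^*$ with constant diagonal $\Psi(1/2)$, placing it squarely within Assumption \ref{ass:payoff}.
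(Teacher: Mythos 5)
Your overall route is sound in outline and genuinely different from the paper's. The paper first proves (Lemma \ref{lem: unique nash}) that uniqueness of the Nash solution forces the opponent's multiplier $\bm{u}^*$ to have full support, hence full column indifference $\sum_i \pi_i^*\alpha_{ij}=t^*$ for \emph{every} $j$ and \emph{every} $\bm{\pi}^*$; it then works at $n\geqslant 5$, perturbs $\pi_{k_1}\to\pi_{k_1}+\delta$, $\pi_{k_2}\to\pi_{k_2}-\delta$, and extracts the ODE $f(x)+xf'(x)=C_1$, whence $f(x)=C_2/x+C_3$ and the contradiction $C_3=C+(n-1)C_2$ (an $n$-dependence forbidden by Assumption \ref{ass:payoff}). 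Your alternative---deriving the functional equation $rf(r)-f(1/r)=C(r-1)$ from $n=2$, then using it to collapse the $n=3$ indifference $F_1=F_2$ to $f(\pi_3/\pi_1)=f(\pi_3/\pi_2)$, forcing $f$ constant and hence a trivial payoff---is more elementary (no differentiation or limit is needed), and your algebra checks out: the pair-$\{1,2\}$ block indeed cancels exactly by your $(\star)$. Both proofs must, and do, exploit that the same $(C,f)$ serves several values of $n$; this is unavoidable given that the construction \eqref{eq:explicit construction II} achieves matching for any single fixed $n$. Your final reduction of problem \eqref{eq:gen-nlhf} under BTL to Assumption \ref{ass:payoff} is exactly the paper's.

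The genuine gap is your justification of complete mixing (full column indifference) at $n=3$. You argue that if $\bm{u}^*$ omitted a column, the over-determined system $A(\bm{\pi}^*)\bm{u}^*=t^*\mathbf{1}$ would force a determinant of the ratio-structured matrix to vanish, and that this ``cuts out only a measure-zero set of $\bm{\pi}^*$.'' That step is not valid here: the determinant is a \emph{smooth} (not real-analytic) function of $\bm{\pi}^*$, since $f$ is only assumed smooth, and the zero set of a smooth function can have positive measure---indeed it can be everything. The degenerate case is not hypothetical: if $f$ were constant (precisely the situation your proof is steering toward), every such determinant vanishes identically and your genericity claim yields nothing, so the argument cannot even dispose of the case it must reach; nothing rules out other non-constant smooth $f$ with fat degeneracy sets either. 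Consequently ``complete mixing holds on a dense open set'' is unproven, and the continuity of $f$ has nothing to propagate. The paper closes exactly this hole with Lemma \ref{lem: unique nash}: if some $u_j^*=0$, a dimension-counting/perturbation argument produces a whole segment of policies satisfying the same KKT system as $\bm{\pi}^*$, contradicting uniqueness---valid for every $\bm{\pi}^*$ and every $n$, with no genericity. Your proof becomes correct if you replace the measure-zero argument by that lemma (or by a direct argument: with fewer than $n$ active indifference constraints, the set of maximin strategies containing the interior point $\bm{\pi}^*$ has dimension at least one, contradicting uniqueness); the same justification is also what your ``automatic'' claim at $n=2$ secretly needs. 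Two cosmetic points: the substitution $\pi_3=1$, $\pi_1=1/a$, $\pi_2=1/b$ violates normalization and should be rescaled so that $\pi_1+\pi_2+\pi_3=1$ (the ratios $\pi_3/\pi_1,\pi_3/\pi_2$ can still be made arbitrary positive numbers), and the inequality direction in Lemma \ref{lem: kkt} plays no role in your argument since you only ever use the equality (indifference) part.
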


\subsection{Proof of Theorem \ref{thm: no pm}}
\label{sec:proof-no-pm}
We first present a useful lemma (Lemma \ref{lem: unique nash}) that further investigates the KKT conditions (Lemma \ref{lem: kkt}) when the payoff matrix induces a unique Nash equilibrium.
\begin{lemma}\label{lem: unique nash}
    If a game with the payoff matrix $\{\alpha_{ij}\}_{i,j=1}^n$ has a unique Nash solution $\bm{\pi}^*$, then for any $j\in[n]$, it must hold $u_j^*>0$ in the KKT conditions, and
    \[
    \sum_{i=1}^n\pi_i^*\alpha_{ij}=t^*.
    \]
\end{lemma}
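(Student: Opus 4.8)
The plan is to prove Lemma \ref{lem: unique nash} by showing that if some $u_j^* = 0$, then the game would admit a Nash solution other than $\bm{\pi}^*$, contradicting uniqueness. Recall that $\bm{u}^*$ plays the role of the second player's equilibrium strategy (it is nonnegative and sums to $1$), and the KKT conditions in Lemma \ref{lem: kkt} are exactly the equilibrium conditions: the first player's payoff against $\bm{u}^*$ is maximized (achieving value $t^*$) on $\operatorname{supp}(\bm{\pi}^*) = [n]$, and the second player's payoff $\sum_j \alpha_{ij} u_j^*$ equals $t^*$ for every row $i$. The key observation is that the third KKT condition, $\sum_{j=1}^n \alpha_{ij} u_j^* = t^*$ holding for \emph{all} $i \in [n]$ (not merely on the support of $\bm{u}^*$), means every pure strategy of the first player is a best response to $\bm{u}^*$, so $\bm{u}^*$ is indifferent across all of the opponent's pure actions.

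First I would suppose for contradiction that $u_{j_0}^* = 0$ for some index $j_0$. The strategy is to perturb $\bm{u}^*$ to produce a distinct second-player equilibrium strategy while keeping $\bm{\pi}^*$ as a first-player Nash solution, and then invoke an exchangeability/interchangeability property of equilibria in zero-sum games to manufacture a first-player solution different from $\bm{\pi}^*$. Concretely, because $\bm{\pi}^*$ is interior and satisfies $\sum_i \pi_i^* \alpha_{i j} \leqslant t^*$ for all $j$ with equality on $\operatorname{supp}(\bm{u}^*)$, the complementary slackness $u_j^*(\sum_i \pi_i^* \alpha_{ij} - t^*) = 0$ forces the first-player value structure; I would examine whether $\sum_i \pi_i^* \alpha_{i j_0} < t^*$ strictly. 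If it is strict, then in fact the pair $(\bm{\pi}^*, \bm{u}^*)$ together with the all-rows condition lets me look for an alternative first-player solution supported away from the binding constraints, contradicting uniqueness of $\bm{\pi}^*$. If instead $\sum_i \pi_i^* \alpha_{i j_0} = t^*$ even though $u_{j_0}^* = 0$, then $j_0$ is an unused but tight column, and I can shift mass onto $u_{j_0}^*$ to build a second distinct equilibrium $\bm{u}^{**} \neq \bm{u}^*$; by the standard rectangularity of the Nash equilibrium set in a two-player zero-sum game, $(\bm{\pi}^*, \bm{u}^{**})$ is then also an equilibrium, and one further perturbation yields a first-player strategy $\neq \bm{\pi}^*$, again contradicting uniqueness.

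The cleaner route, which I would ultimately pursue, exploits the third KKT condition directly: since $\sum_{j=1}^n \alpha_{ij} u_j^* = t^*$ holds for all $i$, the vector $\bm{\Psi}$-image $\boldsymbol{\alpha} \bm{u}^* = t^* \mathbf{1}$, i.e., $\bm{u}^*$ solves a linear system making the first player globally indifferent. I would then show that the first displayed KKT inequality $\sum_i \pi_i^* \alpha_{ij} \leqslant t^*$ must in fact hold with equality for every $j$: if it were strict for some $j_1$, I could slightly decrease $\pi^*$-mass redistributed toward exploiting column $j_1$'s slack to construct a \emph{second} optimal first-player strategy $\bm{\pi}^{**} \neq \bm{\pi}^*$ still achieving value $t^*$ against $\bm{u}^*$ (using that $\bm{u}^*$'s support-indifference leaves room), violating uniqueness. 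Equality $\sum_i \pi_i^* \alpha_{ij} = t^*$ for all $j$ is precisely the claimed conclusion, and combined with strict positivity it also forces $u_j^* > 0$ (otherwise the symmetric perturbation argument on the second player again breaks uniqueness).

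The main obstacle I anticipate is making the perturbation arguments fully rigorous: I must verify that the constructed alternative strategy is genuinely a Nash solution (not merely a best response against one fixed opponent strategy) and that it is genuinely distinct from $\bm{\pi}^*$. This requires carefully invoking the minimax structure and the equilibrium interchangeability theorem for zero-sum games, ensuring the perturbation stays within the simplex and preserves optimality in \emph{both} the max and min directions simultaneously. I expect the symmetry between the two roles (first player's inequalities versus second player's equalities) to make the two halves of the argument parallel, so that once the harder direction—promoting strict inequality to equality via a uniqueness-violating perturbation—is established, the claim $u_j^* > 0$ follows by the dual construction.
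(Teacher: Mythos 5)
Your high-level strategy for the equality claim---perturb $\bm{\pi}^*$ to manufacture a second Nash solution and contradict uniqueness---is the same idea the paper uses, but your proposal omits the entire technical core. You never establish that a valid perturbation direction exists. ``Slightly decrease $\pi^*$-mass redistributed toward exploiting column $j_1$'s slack'' must survive three constraints simultaneously: the columns in $\operatorname{supp}(\bm{u}^*)$ must stay tight (complementary slackness), the columns outside $\operatorname{supp}(\bm{u}^*)$ that happen to be tight must not be violated, and the perturbed vector must stay on the simplex. Naive counting can fail: if $|\operatorname{supp}(\bm{u}^*)|=n-1$, the support equalities plus the normalization already give $n$ linear conditions on $\mathbb{R}^n$, and if two unsupported columns are tight, a direction $\bm{d}$ may violate one of them while $-\bm{d}$ violates the other. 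The paper's proof is built precisely to overcome this: it works with constraints \emph{relative} to a support column $j_0$ (so the constraint set is a homogeneous convex cone, and normalization is postponed, using $\bm{\pi}^*>0$ to keep positivity), it converts all but one of the problematic inequalities into equalities (the space $V_3$, cut out by $n-2$ equations, hence of dimension $\geqslant 2$), and it handles the single remaining inequality by noting that $\bm{d}$ or $-\bm{d}$ satisfies it. No substitute for this machinery appears in your plan. You also flag, but never close, the verification that the perturbed point is a genuine Nash solution rather than merely a best response to $\bm{u}^*$; the missing step is that the third KKT condition $\sum_j\alpha_{ij}u_j^*=t^*$ forces any normalized point of the constraint cone to have payoff exactly $t^*$ against every supported column, so the full KKT system holds with the same pair $(\bm{u}^*,t^*)$ and Lemma \ref{lem: kkt} applies.

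The more serious flaw is your treatment of $u_j^*>0$. You propose to obtain it ``by the dual construction'': shift mass onto an unused column to build a second column-optimal strategy $\bm{u}^{**}\neq\bm{u}^*$ and invoke interchangeability/rectangularity. But the lemma's hypothesis is uniqueness of the \emph{first} player's Nash solution only; exhibiting many optimal strategies for the second player contradicts nothing, and rectangularity only re-pairs existing optimal strategies---it never creates a new first-player one. Your chosen order of proof makes this worse: once you have proved $\sum_i\pi_i^*\alpha_{ij}=t^*$ for all $j$, complementary slackness becomes vacuous, so positivity of the multiplier cannot be extracted from it afterwards. The paper's order is the reverse, and it is the one that works: assume some $u_j^*=0$, run the dimension/cone construction on the \emph{first} player to produce $\bm{\pi}_2^*\neq\bm{\pi}^*$ satisfying KKT with the same $(\bm{u}^*,t^*)$, contradict uniqueness, conclude $u_j^*>0$ for every $j$, and only then read off the equality for all $j$ directly from complementary slackness.
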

\begin{proof}[Proof of Lemma \ref{lem: unique nash}]
    Suppose that the KKT conditions provide the unique Nash solution $(\bm{\pi}^*,\bm{u}^*,t^*)$. Then we define:
    \[
    \mathcal{J}_0:=\left\{j\in[n]: u_j^*\neq 0\right\},\text{ and } 
    \tilde{\mathcal{J}}_0:=\left\{j\in[n]: u_j^*=0\right\},
    \]
    with $\mathcal{J}_0\cup \tilde{\mathcal{J}}_0=[n]$.
    Since $\bm{u}^*\geqslant 0$ and $\sum u_j^*=1$, there exists $j\in [n]$, such that $u_j^*\neq 0$, i.e., $\mathcal{J}_0\neq \emptyset$. 
    Now, we aim to show $\tilde{\mathcal{J}}_0=\emptyset$. 
    We prove by contradiction.
    Suppose $\tilde{\mathcal{J}}_0\neq\emptyset$, taking $j_0\in\mathcal{J}_0$, we consider two spaces
    \[
    \begin{aligned}
        &\ V_1:=\left\{\bm{\pi}\in\mathbb{R}^n:\sum_{i=1}^n\pi_i\left(\alpha_{ij}-\alpha_{ij_0}\right)=0,\ \forall j\in \mathcal{J}_0\backslash \{j_0\} \right\},\\
        &\ V_2:=V_1\bigcap\left\{
        \bm{\pi}\in\mathbb{R}^n: \sum_{i=1}^n\pi_i(\alpha_{ij}-\alpha_{ij_0})\leqslant 0,\ \forall j\in\tilde{\mathcal{J}}_0
        \right\}.
    \end{aligned}
    \]
    Then we claim that $\bm{\pi}^*\in V_2$ and $\operatorname{dim}(V_2)\geqslant 2$.
    For the first claim, by the KKT conditions in Lemma \ref{lem: kkt}, for any $j\in\mathcal{J}_0$, we obtain
    \[
    \sum_{i=1}^n\pi^*_i\alpha_{ij}=t^*=\sum_{i=1}^n\pi^*_i\alpha_{ij_0},
    \]
    thus $\bm{\pi}^*\in V_1$. 
    Moreover, again by the KKT conditions, for any $j\in\tilde{\mathcal{J}}_0$, we have 
    \[
    \sum_{i=1}^n\pi_i^*\alpha_{ij}\leqslant t^*= \sum_{i=1}^n\pi^*_i\alpha_{ij_0},
    \]
    which shows that $\bm{\pi}^*\in V_2$.
    For the second claim, take $\tilde{j}_0\in\tilde{\mathcal{J}}_0$ and consider 
    \[
    \begin{aligned}
        &\ V_3:=\left\{\bm{\pi}\in\mathbb{R}^n:\sum_{i=1}^n\pi_i\left(\alpha_{ij}-\alpha_{ij_0}\right)=0,\ \forall j\in [n]\backslash \{j_0,\tilde{j}_0\}\right\},\\
        &\ V_4:=V_3\bigcap\left\{
        \bm{\pi}\in\mathbb{R}^n: \sum_{i=1}^n\pi_i(\alpha_{i\tilde{j}_0}-\alpha_{ij_0})\leqslant 0
        \right\}.
    \end{aligned}
    \]
    We can easily see $V_4\subseteq V_2$. 
    Note that $V_3$ can be regarded as a kernel space of a linear transformation from $\mathbb{R}^n$ to $\mathbb{R}^{n-2}$. 
    By the dimension theorem in linear algebra, we obtain $\operatorname{dim}(V_3)=n-\operatorname{dim}(\operatorname{Im}(A))\geqslant n-(n-2)=2$. 
    For any $\bm{\pi}\in V_3$, it must hold that $\bm{\pi}\in V_4$ or $-\bm{\pi}\in V_4$, so $\operatorname{dim}(V_4)=\operatorname{dim}(V_3)\geqslant 2$. 
    Therefore, we have $\operatorname{dim}(V_1)\geqslant\operatorname{dim}(V_2)\geqslant\operatorname{dim}(V_4)\geqslant 2$.

    Thus, we can take another $\tilde{\bm{\pi}}^*\in V_2$ which is linear independent with $\bm{\pi}^*$. 
    Note that for any $a,b\in\mathbb{R}_+$, $a\bm{\pi}^*+b\tilde{\bm{\pi}}^*\in V_2$. 
    Taking large $a\in\mathbb{R}_+$, we have $a\bm{\pi}^*+b\tilde{\bm{\pi}}^*\in V_2$ and $a\bm{\pi}^*+b\tilde{\bm{\pi}}^*>0$, since $\pi^*>0$. 
    Therefore, there exists $a_1\in\mathbb{R}_+$, such that $\bm{\pi}_2^*:=\frac{a\bm{\pi}^*+\tilde{\bm{\pi}}^*}{a_1}\in V_2$ that satisfies $\bm{\pi}_2^*\neq \bm{\pi}^*$, $\bm{\pi}_2^*>0$, and $\sum_i \pi_{2,i}^*=1$. 
    Thus, we obtain another Nash equilibrium $(\bm{\pi}_2^*, \bm{u}^*, t^*)$, causing contradiction to the uniqueness of Nash solution. 
    Hence, it must hold that $\tilde{\mathcal{J}}_0=\emptyset$.
\end{proof}

Next we provide the proof of Theorem \ref{thm: no pm}. Using Lemma \ref{lem: unique nash}, uniqueness requires us to seek solutions that satisfies $$\sum_{i=1}^n\pi^*_i\alpha_{ij}=t^*$$ for all $j\in [n]$, where $t^*$ is a constant that may depend on $\bm{\pi}^*$.
Consider $n\geqslant 5$, for any four distinct indices $j_1, j_2, k_1, k_2$, we have
\begin{equation}\label{eq: equation before variation}
    \begin{aligned}
&\ \sum_{i\neq j_1,k_1,k_2}\pi_if\left(
\frac{\pi_i}{\pi_{j_1}}\right) + \pi_{k_1}f\left(\frac{\pi_{k_1}}{\pi_{j_1}}
\right) + \pi_{k_2}f\left(\frac{\pi_{k_2}}{\pi_{j_1}}\right) + C\pi_{j_1}\\
= &\ \sum_{i\neq j_2,k_1,k_2}\pi_if\left(
\frac{\pi_i}{\pi_{j_2}}\right) + \pi_{k_1}f\left(\frac{\pi_{k_1}}{\pi_{j_2}}
\right) + \pi_{k_2}f\left(\frac{\pi_{k_2}}{\pi_{j_2}}\right) + C\pi_{j_2}
\end{aligned}
\end{equation}
Let us consider the infinitesimal variation $\pi_{k_1}\to \pi_{k_1}+\delta$ and $\pi_{k_2}\to \pi_{k_2}-\delta$, keeping others still. We obtain that 
\begin{equation}\label{eq: equation after variation}
    \begin{aligned}
&\ \sum_{i\neq j_1,k_1,k_2}\pi_if\left(
\frac{\pi_i}{\pi_{j_1}}\right) + (\pi_{k_1}+\delta)f\left(\frac{\pi_{k_1}+\delta}{\pi_{j_1}}
\right) + (\pi_{k_2}-\delta)f\left(\frac{\pi_{k_2}-\delta}{\pi_{j_1}}\right) + C\pi_{j_1}\\
= &\ \sum_{i\neq j_2,k_1,k_2}\pi_if\left(
\frac{\pi_i}{\pi_{j_2}}\right) + (\pi_{k_1}+\delta)f\left(\frac{\pi_{k_1}+\delta}{\pi_{j_2}}
\right) + (\pi_{k_2}-\delta)f\left(\frac{\pi_{k_2}-\delta}{\pi_{j_2}}\right) + C\pi_{j_2}
\end{aligned}
\end{equation}
Subtracting both sides of \eqref{eq: equation before variation} from \eqref{eq: equation after variation}, we obtain that 
\begin{equation}\label{eq:equation variation difference}
    \begin{aligned}
&\ (\pi_{k_1}+\delta)f\left(\frac{\pi_{k_1}+\delta}{\pi_{j_1}}
\right) + (\pi_{k_2}-\delta)f\left(\frac{\pi_{k_2}-\delta}{\pi_{j_1}}\right) - \pi_{k_1}f\left(\frac{\pi_{k_1}}{\pi_{j_1}}
\right) - \pi_{k_2}f\left(\frac{\pi_{k_2}}{\pi_{j_1}}\right) \\
= &\  
(\pi_{k_1}+\delta)f\left(\frac{\pi_{k_1}+\delta}{\pi_{j_2}}
\right) + (\pi_{k_2}-\delta)f\left(\frac{\pi_{k_2}-\delta}{\pi_{j_2}}\right) -
\pi_{k_1}f\left(\frac{\pi_{k_1}}{\pi_{j_2}}
\right) - \pi_{k_2}f\left(\frac{\pi_{k_2}}{\pi_{j_2}}\right)\,,
    \end{aligned}
\end{equation}
i.e., we have 
\begin{equation}\label{eq: equation delta}
    \begin{aligned}
        &\ (\pi_{k_1}+\delta)\left(f\left(\frac{\pi_{k_1}+\delta}{\pi_{j_1}}
\right)-f\left(\frac{\pi_{k_1}}{\pi_{j_1}}
\right)\right) + \delta f\left(\frac{\pi_{k_1}}{\pi_{j_1}}
\right) \\
&\ \ + (\pi_{k_2}-\delta)\left(f\left(\frac{\pi_{k_2}-\delta}{\pi_{j_1}}\right) - f\left(\frac{\pi_{k_2}}{\pi_{j_1}}\right)\right) - \delta f\left(\frac{\pi_{k_2}}{\pi_{j_1}}\right)\\
= &\ (\pi_{k_1}+\delta)\left(f\left(\frac{\pi_{k_1}+\delta}{\pi_{j_2}}
\right)-f\left(\frac{\pi_{k_1}}{\pi_{j_2}}
\right)\right) + \delta f\left(\frac{\pi_{k_1}}{\pi_{j_2}}
\right) \\
&\ \ + (\pi_{k_2}-\delta)\left(f\left(\frac{\pi_{k_2}-\delta}{\pi_{j_2}}\right) - f\left(\frac{\pi_{k_2}}{\pi_{j_2}}\right)\right) - \delta f\left(\frac{\pi_{k_2}}{\pi_{j_2}}\right).
    \end{aligned}
\end{equation}
As $f$ is smooth, using  
\[
\lim_{\delta\to 0}\frac{f\left(\frac{x+\delta}{\pi_j}\right)-f\left(\frac{x}{\pi_j}\right)}{\delta}=\frac{1}{\pi_j}f^\prime\left(\frac{x}{\pi_j}\right),
\]
and taking $\delta\to 0$, we obtain the following identity from \eqref{eq: equation delta},
\begin{equation}
    \begin{aligned}
        &\ f\left(\frac{\pi_{k_1}}{\pi_{j_1}}\right) + \frac{\pi_{k_1}}{\pi_{j_1}}f^\prime\left(\frac{\pi_{k_1}}{\pi_{j_1}}\right) - f\left(\frac{\pi_{k_2}}{\pi_{j_1}}\right) - \frac{\pi_{k_2}}{\pi_{j_1}}f^\prime\left(\frac{\pi_{k_2}}{\pi_{j_1}}\right)\\
        = &\ f\left(\frac{\pi_{k_1}}{\pi_{j_2}}\right) + \frac{\pi_{k_1}}{\pi_{j_2}}f^\prime\left(\frac{\pi_{k_1}}{\pi_{j_2}}\right) - f\left(\frac{\pi_{k_2}}{\pi_{j_2}}\right) - \frac{\pi_{k_2}}{\pi_{j_2}}f^\prime\left(\frac{\pi_{k_2}}{\pi_{j_2}}\right).\\
    \end{aligned}
\end{equation}
Thus, we obtain that 
\begin{equation}\label{eq: equation align 1}
    \begin{aligned}
        &\ f\left(\frac{\pi_{k_1}}{\pi_{j_1}}\right) + \frac{\pi_{k_1}}{\pi_{j_1}}f^\prime\left(\frac{\pi_{k_1}}{\pi_{j_1}}\right) -f\left(\frac{\pi_{k_1}}{\pi_{j_2}}\right) -\frac{\pi_{k_1}}{\pi_{j_2}}f^\prime\left(\frac{\pi_{k_1}}{\pi_{j_2}}\right) \\
        = &\ f\left(\frac{\pi_{k_2}}{\pi_{j_1}}\right) + \frac{\pi_{k_2}}{\pi_{j_1}}f^\prime\left(\frac{\pi_{k_2}}{\pi_{j_1}}\right) - f\left(\frac{\pi_{k_2}}{\pi_{j_2}}\right) - \frac{\pi_{k_2}}{\pi_{j_2}}f^\prime\left(\frac{\pi_{k_2}}{\pi_{j_2}}\right).\\
    \end{aligned}
\end{equation}
Since \eqref{eq: equation align 1} holds for any $\pi>0$, given any $\pi_{j_1}\neq\pi_{j_2}$, for any $x_1,x_2\in(0,1-\pi_{j_1}-\pi_{j_2})$, we have
\[
\begin{aligned}
        &\ f\left(\frac{x_1}{\pi_{j_1}}\right) + \frac{x_1}{\pi_{j_1}}f^\prime\left(\frac{x_1}{\pi_{j_1}}\right) -f\left(\frac{x_1}{\pi_{j_2}}\right) -\frac{x_1}{\pi_{j_2}}f^\prime\left(\frac{x_1}{\pi_{j_2}}\right) \\
        = &\ f\left(\frac{x_2}{\pi_{j_1}}\right) + \frac{x_2}{\pi_{j_1}}f^\prime\left(\frac{x_2}{\pi_{j_1}}\right) - f\left(\frac{x_2}{\pi_{j_2}}\right) - \frac{x_2}{\pi_{j_2}}f^\prime\left(\frac{x_2}{\pi_{j_2}}\right),\\
    \end{aligned}
\]
which induces the following for any $x\in (0,1-\pi_{j_1}-\pi_{j_2})$, 
\begin{equation}
f\left(\frac{x}{\pi_{j_1}}\right) + \frac{x}{\pi_{j_1}}f^\prime\left(\frac{x}{\pi_{j_1}}\right) -f\left(\frac{x}{\pi_{j_2}}\right) -\frac{x}{\pi_{j_2}}f^\prime\left(\frac{x}{\pi_{j_2}}\right) = C(\pi_{j_1},\pi_{j_2})\,,
\end{equation}
i.e., we have
\begin{equation}
    f\left(\frac{x}{\pi_{j_1}}\right) + \frac{x}{\pi_{j_1}}f^\prime\left(\frac{x}{\pi_{j_1}}\right) = C(\pi_{j_1},\pi_{j_2}) + f\left(\frac{x}{\pi_{j_2}}\right) + \frac{x}{\pi_{j_2}}f^\prime\left(\frac{x}{\pi_{j_2}}\right).
\end{equation}
Without any loss of generality, we assume $\pi_{j_1}<\pi_{j_2}$, then we obtain 
\[
\begin{aligned}
    &\ f\left(\frac{x}{\pi_{j_1}}\right) + \frac{x}{\pi_{j_1}}f^\prime\left(\frac{x}{\pi_{j_1}}\right)\\
    = &\ C(\pi_{j_1},\pi_{j_2}) + f\left(\frac{x}{\pi_{j_2}}\right) + \frac{x}{\pi_{j_2}}f^\prime\left(\frac{x}{\pi_{j_2}}\right)\\
    = &\ 2C(\pi_{j_1},\pi_{j_2}) + f\left(\frac{\pi_{j_1}x}{\pi^2_{j_2}}\right) + \frac{\pi_{j_1}x}{\pi^2_{j_2}}f^\prime\left(\frac{\pi_{j_1}x}{\pi^2_{j_2}}\right)\\
    = &\ \cdots\cdots \\
    = &\ nC(\pi_{j_1},\pi_{j_2}) + f\left(\frac{\pi^{n-1}_{j_1}x}{\pi^n_{j_2}}\right) + \frac{\pi^{n-1}_{j_1}x}{\pi^n_{j_2}}f^\prime\left(\frac{\pi^{n-1}_{j_1}x}{\pi^n_{j_2}}\right)\\
    = &\ \cdots\cdots.
\end{aligned}
\]
Taking limit, it must hold $C(\pi_{j_1},\pi_{j_2})=0$, i.e.,  we have
\begin{equation}\label{eq: equation align 2}
    f\left(\frac{x}{\pi_{j_1}}\right) + \frac{x}{\pi_{j_1}}f^\prime\left(\frac{x}{\pi_{j_1}}\right) = f\left(\frac{x}{\pi_{j_2}}\right) + \frac{x}{\pi_{j_2}}f^\prime\left(\frac{x}{\pi_{j_2}}\right).
\end{equation}
Since \eqref{eq: equation align 2} holds for any $\pi>0$, for any $x_1,x_2\in\mathbb{R}_+$, we have
\[
f(x_1)+x_1f^\prime(x_1)=f^\prime(x_2)+x_2f^\prime(x_2),
\]
thus, for any $x\in\mathbb{R}_+$, we have
\begin{equation}\label{eq: no pm ode}
    f\left(x\right) + xf^\prime\left(x\right) = C_1\,.
\end{equation}
Solving \eqref{eq: no pm ode}, we obtain that
$$
f(x)=\frac{C_2}{x}+C_3\,.
$$
Then we obtain that 
\[
\sum_{i=1}^n\pi_i\alpha_{ij}=C\pi_j + \sum_{i\neq j}\pi_i\left(\frac{C_2\pi_j}{\pi_i}+C_3\right)
= C_3 + (C+(n-1)C_2-C_3)\pi_j\,,
\]
yielding $C_3=C+(n-1)C_2$, and 
\[
f(x)=C + C_2\left(\frac{1}{x}+n-1\right),
\]
which is contradictory to our assumptions.

\section{Conclusions}
\label{sec:discuss}

We have investigated several properties motivated by social choice theory and diversity considerations within the general game-theoretic LLM alignment framework \eqref{eq:gen-nlhf}, where the payoff is designed as a mapping $\Psi$ of the original preference. We have identified the necessary and sufficient conditions on $\Psi$ to guarantee Condorcet consistency and Smith consistency. These conditions allow for a considerably broad class of choices for $\Psi$, demonstrating that these desirable alignment properties are not sensitive to the exact values of the payoff, thereby providing a theoretical foundation for the robustness of the game-theoretic LLM alignment approach. Additionally, we have examined conditions on $\Psi$ that ensure the resulting policy is a mixed strategy, preserving diversity in human preferences. Finally, we have proved that achieving exact preference matching is impossible under the general game-theoretic alignment framework with a smooth mapping, revealing fundamental limitations of this approach.

Our findings suggest several promising directions for future research on LLM alignment. First, while we establish an impossibility result for preference matching under the assumption that $\Psi$ is smooth, it remains an open question whether preference matching can be achieved when $\Psi$ is merely continuous. Second, in practical settings, regularization terms based on the reference model are often added to problem \eqref{eq:gen-nlhf}. Regularization may be crucial for preference matching, for example, \citet{xiao2024algorithmic} modified the regularization term in \texttt{RLHF} to achieve preference matching. Analyzing the alignment properties of game-theoretic methods with such regularization is another interesting avenue for future work. Furthermore, how to explicitly define a preference-matching policy for general preferences that do not satisfy the BTL model, and how to develop alignment approaches capable of learning such a policy, remain open problems. Finally, our results highlight that practical preference models must satisfy certain anti-symmetry conditions to ensure Smith consistency---conditions that are not guaranteed by several currently used models. Thus, designing preference model architectures that enforce anti-symmetry is an important and interesting future direction.

{\small
\section*{Acknowledgments} 
We are grateful to Yuhe Qin for providing helpful discussions.
This work was supported in part by NIH grant U01CA274576, ARPA-H Award D24AC00253, NSF grant DMS-2310679, a Meta Faculty Research Award, and Wharton AI for Business.

\bibliographystyle{abbrvnat}
\bibliography{Bibliography}
}

\clearpage
\appendix
\section{Proof of Lemma \ref{lem: kkt}}\label{pf: kkt}
Suppose each player has $n$ policies and the payoff matrix is $\{\alpha_{ij}\}_{i=1}^n$. Then,
$$
\begin{aligned}
\max_{\bm{\pi}}\min_{\bm{\pi}^{\prime}}
\left\{
\sum_{i=1}^n\sum_{j=1}^n\alpha_{ij}\pi_i\pi^{\prime}_j\right\}
 =& \max_{\bm{\pi}}\min_{\bm{\pi}^{\prime}}
\left\{
\sum_{j=1}^n \left(\sum_{i=1}^n\alpha_{ij}\pi_i\right)\pi^{\prime}_j\right\}
= \max_{\bm{\pi}}\min_j
\left\{
\sum_{i=1}^n\alpha_{ij}\pi_i
\right\} .
\end{aligned}
$$
Let us reformulated it into a convex optimization problem.
\begin{equation}\tag{$P$}
    \begin{aligned}
\min_{\bm{\pi}}\quad\quad\quad &\max_j \sum_{i=1}^n\alpha_{ij}\pi_i \\
\text{subject to} \ \
& -\pi_i \leqslant 0, \quad i = 1, \ldots, n \\
& \sum_{i=1}^n \pi_i- 1=0
\end{aligned}
\end{equation}
Let us further reformulate this problem into the epigraph form by introducing a single variable $t\in\mathbb{R}$:
\begin{equation}\tag{$P^\prime$}
    \begin{aligned}
\min_{\bm{\pi},t}\quad\quad\quad &t \\
\text{subject to}  \ \ 
& \sum_{i=1}^n\alpha_{ij}\pi_i-t\leqslant 0, \quad j = 1, \ldots, n \\
& -\pi_i \leqslant 0, \quad i = 1, \ldots, n \\
& \sum_{i=1}^n \pi_i- 1=0
\end{aligned}
\end{equation}
By introducing the dual variables $\bm{u}^*\in\mathbb{R}^n$, $\tilde{\bm{u}}^*\in\mathbb{R}^n$ and $v^*\in\mathbb{R}$, the KKT conditions is:
\begin{itemize}
    \item stationary condition: $$\sum_{j=1}^n\alpha_{ij}u_j^*-\tilde{u}_i^*=-v^* \quad i=1,\cdots,n$$
    \item complementary slackness: $$u_j^*\left(\sum_{i=1}^n\pi_i^*\alpha_{ij}-t^*\right)=0\quad j=1,\cdots,n$$
    $$\tilde{u}^*_i\pi^*_i=0\quad i=1,\cdots,n$$
    \item primal feasibility:
    $$\sum_{i=1}^n \pi_i^*\alpha_{ij}-t^*\leqslant 0$$
    $$\bm{\pi}^* \geqslant 0$$
    $$\sum_{i=1}^n \pi_i^*=1$$
    \item dual feasibility:
    $$\bm{u}^*\geqslant 0$$
    $$\sum_{i=1}^nu^*_i=1$$
    $$\tilde{\bm{u}}^*\geqslant 0$$
\end{itemize}

We can easily see that Slater's condition is satisfied for this problem, so the KKT points are equivalent to primal and dual solutions.
Then taking $\bm{\pi}^*>0$ into account, we have $\tilde{u}_i^*=0$ by the second complementary slackness condition, and the above equations can be simplified to the following system of equations:
\begin{equation*}
    \begin{cases}
        \bm{u}^*\geqslant 0 & \\
        \sum_{i=1}^nu^*_i=1 & \\
        \sum_{i=1}^n \pi_i^*\alpha_{ij}-t^*\leqslant 0 & j=1,\cdots,n \\
        u_j^*\left(\sum_{i=1}^n\pi_i^*\alpha_{ij}-t^*\right)=0&j=1,\cdots,n\\
        \sum_{j=1}^n\alpha_{ij}u_j^*=-v^*&i=1,\cdots,n
    \end{cases}
\end{equation*}
Moreover, notice that 
\[
0=\sum_{j=1}^nu_j^*\left(\sum_{i=1}^n\pi_i^*\alpha_{ij}-t^*\right)=\sum_{i=1}^n\pi_i^*\sum_{j=1}^n\alpha_{ij}u_j^* - t^* =-v^*-t^*\,,
\]
thus $v^*=-t^*$. Hence, we conclude our proof.

\section{Verifying Equation \texorpdfstring{\eqref{eq:explicit construction I}}{(I)} and Equation \texorpdfstring{\eqref{eq:explicit construction II}}{(II)}}
\label{app:verifying}
For \eqref{eq:explicit construction I}, choosing $t^*=-v^*=\sum_{i=1}^n (\pi^*_i)^2$ and $u_i^*=v_i^*$, $\bm{\pi}^*$ is a Nash solution. 
For \eqref{eq:explicit construction II}, choosing $t^*=-v^*=0$ and $u_j^*=\frac{(\pi_j^*)^{-1}}{\sum_{j=1}^n(\pi_j^*)^{-1}}$, $\bm{\pi}^*$ is a Nash solution.

\end{document}